\documentclass[12pt]{article}
\usepackage{amsmath}
\usepackage{amsthm}
\newtheorem{theorem}{Theorem}
\newtheorem{proposition}{Proposition}
\newtheorem{lemma}{Lemma}
\newtheorem{remark}{Remark}
\usepackage{graphicx}
\usepackage{multirow}
\usepackage{natbib}
\usepackage{url}
\usepackage[utf8]{inputenc}
\usepackage{hyperref}
\usepackage{url}
\usepackage{booktabs}
\usepackage{amsfonts}
\usepackage{nicefrac}
\usepackage{wrapfig}
\usepackage{booktabs}
\usepackage[table,xcdraw]{xcolor}
\usepackage{doi}
\usepackage{subcaption}
\usepackage{listings}
\usepackage{color}
\usepackage{algorithm, algpseudocode}
\usepackage{cleveref}
\usepackage[T1]{fontenc}

\usepackage{float}
\usepackage{orcidlink}

\newcommand{\blind}{0}

\addtolength{\oddsidemargin}{-.5in}%
\addtolength{\evensidemargin}{-1in}%
\addtolength{\textwidth}{1in}%
\addtolength{\textheight}{1.7in}%
\addtolength{\topmargin}{-1in}%

\begin{document}

\def\spacingset#1{\renewcommand{\baselinestretch}%
{#1}\small\normalsize} \spacingset{1}


\if0\blind
{
  \title{\bf Finite-Sample Inference for the Difference-in-Differences Estimator via Dual-Margin Permutation}
  \bigskip
  \author{
    Stanisław M.~S.~Halkiewicz\thanks{Corresponding author.}\,\orcidlink{0009-0000-7344-7522}\\
    \texttt{smsh@student.agh.edu.pl}
    \and
    Andrzej Kałuża\,\orcidlink{0000-0003-2410-689X}\\
    \texttt{akaluza@agh.edu.pl}\\
  	\hspace{.2cm}\\
    AGH University of Cracow, Faculty of Applied Mathematics\\}
  \date{}\maketitle} \fi

\if1\blind
{
  \bigskip
  \begin{center}
    {\bf Finite-Sample Inference for the Difference-in-Differences Estimator via Dual-Margin Permutation}
\end{center}
  \medskip
} \fi

\begin{abstract}
This article develops a significance test for the Difference-in-Differences (DiD) estimator based on dual-margin randomization, in which both the treatment and time indicators are independently permuted to generate an empirical null distribution of the DiD estimator. We situate the proposal explicitly within the landscape of existing inference methods for the DiD estimator, including OLS-based $t$-tests, heteroskedasticity-robust standard errors, cluster-robust variance estimators (CRVE), and the recently proposed jackknife standard errors of \citet{Hansen2025}. We show that CRVE-based procedures can be severely anti-conservative in small samples, motivating a nonparametric alternative. We formally characterise the permutation space induced by dual randomization, showing that it expands by a factor of $\binom{n}{n_T}$ relative to single-margin permutation tests, and provide an information-theoretic (entropy) justification for balanced Bernoulli reshuffling. A controlled simulation study, augmented with robustness experiments under non-Gaussian and heteroskedastic errors, demonstrates that the doubly randomised test maintains accurate empirical size at all sample sizes considered, while HC0 and CRVE1 $t$-tests are substantially anti-conservative at small $n$. Crucially, this parametric inflation is driven by the leverage structure of the regressor matrix rather than by the error variance: heteroskedasticity-robust standard errors do not directly address the leverage-driven finite-sample distortion documented here, whereas randomization-based inference is insulated from both error-distributional and variance-structural departures by construction. Power costs relative to the Hansen jackknife test are real but bounded, and become negligible as $n$ grows. The proposed procedure is implemented in the \texttt{sigDD} R package (GitHub: \href{https://github.com/profsms/sigDD}{profsms/sigDD}) and validated on four empirical datasets from the applied economics literature.
\end{abstract}

\noindent%
{\it Keywords:}  Difference-in-Differences, randomisation inference, finite-sample inference, cluster-robust standard errors, permutation test, small samples

\noindent%
{\it JEL Codes:}  C12, C15, C21, C23, C31
\vfill

\newpage
\spacingset{1.75} 

\section{Introduction}
\label{secIntro}

The Difference-in-Differences (DiD) estimator has become one of the most widely used tools in applied econometrics for evaluating the causal effects of discrete interventions \citep{Card1994, Angrist2008}. Its appeal lies in its intuitive structure: by differencing outcomes across time and between groups, it eliminates time-invariant unobserved heterogeneity and isolates the average treatment effect on the treated. Applications span education \citep{Duflo2001, Domina2015}, labor economics \citep{CardKrueger1994}, health policy \citep{Caniglia2020}, marketing \citep{Deng2019}, and public economics \citep{Callaway2022}, and methodological extensions continue to proliferate \citep{Roth2023}.

Despite this ubiquity, the problem of \emph{inference} for the DiD estimator has received comparatively less attention from practitioners than estimation itself. In the canonical two-group, two-period regression,
\begin{equation}
\label{eq:did-basic}
Y_{it} = \alpha + \beta\,\mathrm{TIME}_t + \gamma\,\mathrm{AFFECTED}_i + \delta\,(\mathrm{TIME}_t \times \mathrm{AFFECTED}_i) + \varepsilon_{it},
\end{equation}
the standard approach is to report a $t$-statistic for the interaction coefficient $\delta$ together with either heteroskedasticity-robust (HC-type) or cluster-robust standard errors. However, a substantial body of evidence documents that these standard approaches can fail badly in the settings where DiD is most often applied: few groups, small samples, and limited numbers of treated clusters.

\citet{Bertrand2004} were among the first to demonstrate comprehensively that ignoring serial correlation in DiD panels leads to severely undersized standard errors and inflated rejection rates. \citet{Donald2007} showed that clustering at the group level is necessary but insufficient when the number of groups is small. \citet{MacKinnon2020} demonstrated that wild cluster bootstrap procedures substantially outperform conventional CRVE methods under few treated clusters, and \citet{Conley2011} proposed an approach exploiting the law of large numbers across policy changes. Most recently, \citet{Hansen2025} showed that CRVE1 — the most commonly reported cluster-robust standard error in applied DiD — can be arbitrarily downward biased, and recommended a jackknife delete-one-cluster estimator with provable non-negative-semidefinite properties as the new default.

This paper contributes to this inference literature by proposing a nonparametric alternative grounded in randomization inference \citep{Fisher1935, ImbensRubin2015}. Rather than correcting or replacing the standard error estimator within the $t$-test paradigm, we generate an empirical null distribution of the DiD estimator by repeatedly permuting both the \texttt{TIME} and \texttt{AFFECTED} indicators — the two binary inputs that define the DiD design. This \emph{doubly randomised} test requires no distributional assumptions on the outcome, no cluster structure, and no asymptotics: its validity follows from the maintained randomization mechanism under the sharp null.

It is important to mention, that this proposal is not purely theoretical. Practitioners working with small or unbalanced DiD designs (a common situation in development economics, policy evaluation, and natural experiments) face a genuine choice between parametric methods whose finite-sample properties may be unreliable and computationally intensive resampling schemes that require careful implementation. The doubly randomised test offers a simple, assumption-lean alternative that can be implemented with any software capable of generating random permutations and fitting linear regressions. The procedure is available in the \texttt{sigDD} R package, first developed in \citet{Halkiewicz2024thesis}. The dual-permutation idea was used heuristically in \citet{Halkiewicz2023} and \citet{Halkiewicz2024thesis}; the present paper constitutes its first formal treatment, establishing the theoretical foundations, exactness guarantees, and combinatorial properties that those earlier applications took for granted.

Our contributions are as follows. First, we provide a careful formal comparison of the doubly randomised test against HC0, CRVE1, and Hansen jackknife $t$-tests, the main parametric alternatives. Second, we establish the exact finite-sample validity of the test under the sharp null (Theorem~\ref{thm:exactness}), characterise the combinatorial expansion of the permutation space relative to single-margin tests (Proposition~\ref{prop:perm-gain}), and justify the entropy-maximising choice of balanced Bernoulli reshuffling. Third, we report a controlled Monte Carlo study with $5{,}000$ replications per configuration, directly comparing empirical size and power of all five methods across a range of sample sizes ($n \in \{20,\ldots,400\}$) and effect sizes ($\delta \in \{0.25,\ldots,2\}$ standard deviations). Fourth, we apply all five methods to four standard empirical datasets and report $p$-values comparably across approaches.

The remainder of the paper is organised as follows. Section~\ref{sec:inference_lit} reviews the inference problem for the DiD estimator and the existing parametric and semi-parametric approaches. Section~\ref{sec:did} briefly introduces the DiD estimator and highlights the sensitivity of inference to design choices. Section~\ref{prelim} reviews the theoretical foundations of randomization inference. Section~\ref{sec:methods} formalises the proposed test and establishes its statistical properties. Section~\ref{sec:numerics} presents the simulation study and empirical applications with comparisons across all five methods. Section~\ref{sec:conclusions} discusses implications and directions for future work.

\section{Inference for the DiD Estimator}
\label{sec:inference_lit}

\subsection{The standard \texorpdfstring{$t$}{t}-test and its failures}

The canonical approach to testing $H_0: \delta = 0$ in \eqref{eq:did-basic} is the $t$-statistic
\begin{equation}
\label{eq:tstat}
t = \frac{\hat{\delta}}{\widehat{\mathrm{SE}}(\hat{\delta})},
\end{equation}
compared against a Student-$t$ or standard normal critical value. Everything then hinges on the choice of $\widehat{\mathrm{SE}}(\hat{\delta})$.

The simplest choice — homoskedastic OLS standard errors — is almost universally recognised as inappropriate for DiD panels because within-unit serial correlation causes OLS to understate uncertainty. \citet{Bertrand2004} showed through placebo simulations that ignoring serial correlation leads to rejection rates several times the nominal level, even with large samples. Their diagnosis prompted widespread adoption of cluster-robust standard errors (CRVE) as the new default.

\subsection{Cluster-robust variance estimators}

Under clustering into $G$ groups with within-group dependence and cross-group independence, the CRVE1 variance estimator takes the form
\begin{equation}
\label{eq:crve1}
\hat{V}_{\mathrm{CRVE1}} = \frac{G(n-1)}{(G-1)(n-k)}\,(X'X)^{-1}
\!\left(\sum_{g=1}^G X_g'\hat{e}_g\hat{e}_g' X_g\right)\!(X'X)^{-1},
\end{equation}
where $\hat{e}_g = Y_g - X_g\hat{\beta}$ are cluster-stacked residuals. While CRVE1 achieves consistency as $G \to \infty$, it can be severely downward biased in finite samples. \citet{Hansen2025} proves that this bias can be arbitrarily large — and arbitrarily misleading — in DiD settings with few treated clusters, heterogeneous cluster sizes, or high-leverage regressors. The Bell--McCaffrey refinement (CRVE2) incorporates a leverage correction but, under general conditions, remains subject to the same downward bias and the resulting Type~I error inflation \citep{Hansen2025}.

\subsection{Bootstrap alternatives}

Wild cluster bootstrap procedures \citep{MacKinnon2020, Webb2023} provide a more reliable alternative to CRVE inference when the number of treated clusters is limited. By resampling clusters and imposing the null hypothesis directly, the wild bootstrap can achieve better size control than asymptotic CRVE methods. However, \citet{Hansen2025} notes that pairs bootstrap methods with few treated clusters frequently encounter draws with no treated clusters, causing rank deficiency and potentially biasing the variance estimate if such draws are discarded rather than handled via generalised inverses.

\subsection{Jackknife standard errors}

\citet{Hansen2025} proposes a delete-one-cluster jackknife as a theoretically principled default for DiD inference. The jackknife variance estimator is
\begin{equation}
\label{eq:vjack}
\hat{V}_{\mathrm{jack}} = \sum_{g=1}^G \bigl(\hat{\beta}_{-g} - \hat{\beta}\bigr)\bigl(\hat{\beta}_{-g} - \hat{\beta}\bigr)',
\end{equation}
where $\hat{\beta}_{-g} = (X'X - X_g'X_g)^+\,(X'Y - X_g'Y_g)$ is the leave-one-cluster-out estimator using the Moore--Penrose pseudoinverse, ensuring it is always defined. The key theoretical property is that $\hat{V}_{\mathrm{jack}}$ is never downward biased (in positive semidefinite order) under arbitrary within-cluster correlation, heteroskedasticity, unequal cluster sizes, and any number of treated clusters — the precise settings where CRVE1 fails most severely.

\citet{Hansen2025} also recommends replacing the standard $t_{G-1}$ reference distribution with an adjusted finite-sample approximation based on effective degrees of freedom $K \in [1, G]$ derived from the leverages of the regressor matrix $X$. Small $K$ signals high leverage or treatment imbalance and is an early warning that conventional CRVE inference will be unreliable.

\subsection{Randomization-based alternatives}

Randomization inference \citep{Fisher1935, ImbensRubin2015} provides a complementary approach that sidesteps variance estimation entirely. Instead of computing $\widehat{\mathrm{SE}}(\hat{\delta})$, the null distribution of $\hat{\delta}$ is generated by permuting treatment assignments, yielding an exact $p$-value under the sharp null of no treatment effect. \citet{Canay2017} established rigorous conditions under which randomization tests achieve approximate symmetry and correct size in econometric applications. In DiD settings with few treated groups, \citet{MacKinnon2020} and \citet{Conley2011} advocate permutation approaches as benchmarks precisely because they do not rely on asymptotics in $G$.

The single-margin permutation test — permuting \texttt{AFFECTED} while fixing \texttt{TIME} — is a natural baseline that preserves the observed temporal structure. The doubly randomised test proposed in this paper extends this by also permuting \texttt{TIME}, substantially enlarging the permutation space and yielding a denser approximation of the null distribution (Section~\ref{subsec:combinatorial}).

\subsection{Positioning of the proposed method}

The doubly randomised test does not aim to replace cluster-robust inference in large-$G$ settings where CRVE1 or the jackknife are well-calibrated. Rather, it targets the regime where conventional parametric methods are known to be unreliable: small samples, few groups, and irregular treatment assignment structures. In this regime, our simulation evidence (Section~\ref{subsec:simdesign}) shows that the doubly randomised test achieves accurate empirical size while maintaining meaningful power at moderate effect sizes, and converges in performance to all other methods as $n$ grows.

This positioning rests on a diagnostic point developed throughout this paper and confirmed empirically in Section~\ref{subsec:sim_size} and Appendix~\ref{app:het}: the finite-sample distortion of CRVE-type procedures is driven by the leverage and effective degrees of freedom of the regressor matrix (the design), not by the variance structure of the errors. Heteroskedasticity-robust standard errors do not directly address this leverage-driven finite-sample distortion: they correct for variance heterogeneity, which is not the primary source of small-sample over-rejection in DiD designs with few treated observations. The randomization framework dispenses with variance estimation altogether and is therefore insulated from both error-distributional and variance-structural departures by construction.

\section{Difference-in-Differences}
\label{sec:did}

\subsection{Estimator and identification}

The DiD estimator measures the average treatment effect on the treated by comparing outcome changes over time between treated and untreated groups. In the two-period, two-group setup \eqref{eq:did-basic}, the OLS estimate of $\delta$ is numerically equivalent to the \emph{2×2 cell means} formula:
\begin{equation}
\label{eq:did-formula}
\hat{\delta}_{\mathrm{DiD}} =
(\bar{Y}_{1,1} - \bar{Y}_{1,0}) - (\bar{Y}_{0,1} - \bar{Y}_{0,0}),
\end{equation}
where $\bar{Y}_{g,t}$ is the sample mean for group $g \in \{0,1\}$ at time $t \in \{0,1\}$.

Under the \emph{parallel trends assumption},
\[
\mathbb{E}[Y_{it}(0) - Y_{i,t-1}(0)\mid\mathrm{AFFECTED}_i=1]
= \mathbb{E}[Y_{it}(0) - Y_{i,t-1}(0)\mid\mathrm{AFFECTED}_i=0],
\]
the parameter $\delta$ consistently identifies the causal average treatment effect on the treated. Violations — pre-existing differential trends, group-specific shocks, or heterogeneous dynamics — lead to biased estimates, irrespective of the inference method applied \citep{Roth2023}.

\subsection{Sensitivity of inference to design choices}

Seemingly minor changes in how \texttt{TIME} and \texttt{AFFECTED} are defined can produce meaningfully different $\hat{\delta}$ values, particularly in small samples. Consider two stylized setups with identical group means at baseline:

\begin{enumerate}
\item \textbf{Stable control group.} The control group is flat over time; the treated group gains $+1$. Then $\hat{\delta}_{\mathrm{DiD}} \approx 1$.
\item \textbf{Expanding control group.} Both groups improve, but the control gains $+0.8$ versus the treated group's $+1.0$. Then $\hat{\delta}_{\mathrm{DiD}} \approx 0.2$.
\end{enumerate}

These estimates differ five-fold. In small samples, even this level of variation can arise from sampling noise alone. Classical asymptotic $t$-tests may then yield spuriously significant results — exactly the regime our simulation evidence documents. The question of when a DiD estimate is \emph{genuinely} significant, rather than an artifact of small-sample noise combined with an oversized test, motivates the development of Section~\ref{sec:methods}.

\section{Randomization-Based Inference}
\label{prelim}

\subsection{Foundations}

Randomization inference \citep{Fisher1935, Neyman1923} grounds hypothesis testing in the known assignment mechanism rather than distributional assumptions about outcomes. Under the sharp null of no treatment effect,
\[
H_0: Y_i(1) = Y_i(0) \quad \text{for all } i,
\]
the observed outcomes are fixed and any variation in a test statistic arises solely from randomness in treatment assignment. The $p$-value for a test statistic $T$ is
\[
p = \frac{1}{|\Omega|}\sum_{\omega \in \Omega}
\mathbf{1}\!\left\{|T(\omega, \mathbf{Y}^{\mathrm{obs}})| \ge |T_{\mathrm{obs}}|\right\},
\]
where $\Omega$ is the set of all treatment assignments consistent with the experimental design \citep{ImbensRubin2015}. Under the maintained sharp-null exchangeability conditions, this $p$-value is finite-sample exact: $\mathbb{P}_{H_0}(p \le \alpha) \le \alpha$ for all $n$ and $\alpha$.

When $|\Omega|$ is large, a Monte Carlo approximation samples $B \ll |\Omega|$ assignments uniformly. The approximation error diminishes as $B \to \infty$ and is independent of $n$.

\subsection{Advantages in small samples}

Randomization tests offer several properties directly relevant to DiD applications:

\begin{itemize}
\item \textbf{Exact Type I error control.} The test maintains nominal size for any $n$ under the sharp null, without reference to asymptotics in $n$ or $G$ \citep{Fisher1935}.
\item \textbf{No distributional assumptions.} The test is valid for arbitrary outcome distributions, including heavy tails, skewness, and heteroskedasticity \citep{Rosenbaum2002}.
\item \textbf{Validity under model misspecification.} Unlike parametric methods, the $p$-value does not depend on the correct specification of \eqref{eq:did-basic} \citep{Young2019}.
\item \textbf{Natural handling of irregular designs.} Extensions to stratified, blocked, and unbalanced designs follow directly from the permutation logic \citep{ImbensRubin2015, Athey2017}.
\end{itemize}

Randomization inference is increasingly used in empirical economics precisely in the settings where standard asymptotics are least reliable: development economics RCTs \citep{Young2019}, few-cluster DiD designs \citep{MacKinnon2020}, network experiments \citep{Aronow2017}, and regression discontinuity robustness checks \citep{Cattaneo2015}.

\subsection{The exchangeability of \texttt{TIME} under the null}

A key question is whether permuting the \texttt{TIME} indicator is justified within the randomization inference framework. Under the sharp null $H_0: \delta = 0$ the treatment contrast is absent, so the observed outcomes are invariant to relabellings of the binary design indicators that preserve the maintained null assignment mechanism. The parallel trends assumption further ensures that, under $H_0$, the DiD specification carries no systematic period-specific contrast information. Permuting \texttt{TIME} therefore generates admissible null assignments under the maintained exchangeability condition, and justifies the dual randomization scheme of Section~\ref{sec:methods}. We emphasise that this is a randomization-inference condition rather than a factual claim about the data-generating process — calendar time is not literally randomised in standard DiD applications. We formalise the conditions in Remark~\ref{rem:time-exchangeable}.

\section{Methods}
\label{sec:methods}

\subsection{Hypotheses and decision rule}
\label{subsec:hyp}

The test is based on the two-sided hypotheses:
\begin{itemize}
\item $H_0$: $\delta = 0$ \quad (no treatment effect),
\item $H_A$: $\delta \neq 0$ \quad (presence of treatment effect).
\end{itemize}

Rather than estimating the standard error of $\hat{\delta}$ and referring it to a $t$-distribution, we generate an empirical null distribution of $\hat{\delta}$ by repeated permutation. The rejection region is determined by the empirical $\alpha/2$ and $1-\alpha/2$ quantiles of the simulated distribution, denoted $P_{\alpha/2}$ and $P_{1-\alpha/2}$. The null is rejected if
\[
\hat{\delta}_{\mathrm{obs}} \notin (P_{\alpha/2},\, P_{1-\alpha/2}).
\]
An empirical $p$-value is computed as the proportion of permuted DiD estimates whose absolute value equals or exceeds $|\hat{\delta}_{\mathrm{obs}}|$, with the standard adjustment of counting the observed statistic as one permutation \citep{Phipson2010}.

\subsection{Algorithmic procedure}
\label{subsec:algorithm}

\begin{enumerate}
\item \textbf{Compute the observed DiD estimate.} Compute $\hat{\delta}_{\mathrm{obs}}$ from the data using \eqref{eq:did-formula} or equivalently as the OLS interaction coefficient.

\item \textbf{Permutation loop.} For each of $B$ Monte Carlo iterations:
\begin{enumerate}
\item Draw $\tilde{T} \sim \mathrm{Bernoulli}(1/2)^n$ and $\tilde{A} \sim \mathrm{Bernoulli}(1/2)^n$ independently.
\item Compute $\hat{\delta}^{(b)} = \mathrm{DiD}(Y, \tilde{T}, \tilde{A})$ and store it.
\end{enumerate}

\item \textbf{Empirical null distribution.} Use $\{\hat{\delta}^{(1)}, \ldots, \hat{\delta}^{(B)}\}$ to approximate the null distribution.

\item \textbf{Inference.} The empirical $p$-value is
\[
\hat{p} = \frac{1 + \#\{b : |\hat{\delta}^{(b)}| \ge |\hat{\delta}_{\mathrm{obs}}|\}}{B + 1}.
\]
Reject $H_0$ at level $\alpha$ if $\hat{p} \le \alpha$.
\end{enumerate}

The procedure is implemented in the \texttt{sigDD} R package, available at \href{https://github.com/profsms/sigDD}{github.com/profsms/sigDD} \citep[App.~A]{Halkiewicz2024thesis}. A pseudocode description appears in Appendix~\ref{app:pseudocode}.

\subsection{Formal randomization space and finite-sample exactness}
\label{subsec:framework}

Let $\mathcal{X} = \{(Y_i, A_i, T_i)\}_{i=1}^n$ denote the sample,
where $A_i \in \{0,1\}$ and $T_i \in \{0,1\}$ are treatment and period
indicators. Define $\Omega$ as the set of admissible label permutations
with uniform probability $\mathcal{P}(\omega) = 1/|\Omega|$.
The test statistic $D(\mathcal{X})$ has randomization distribution
$F_D(t) = \mathcal{P}(D(\omega\mathcal{X}) \le t)$
and empirical $p$-value
\[
p = \frac{1}{|\Omega|}\sum_{\omega\in\Omega}
    \mathbf{1}\!\left\{|D(\omega\mathcal{X})|\ge |D(\mathcal{X})|\right\}.
\]

\begin{theorem}[Exactness under the sharp null]
\label{thm:exactness}
Under the sharp null of no treatment effect and the maintained exchangeability of the randomization mechanism, the distribution of $\mathcal{X}$ is invariant under any $\omega\in\Omega$. Consequently, $\mathbb{P}_{H_0}(p\le \alpha)\le \alpha$ for all $\alpha\in[0,1]$; the test is finite-sample exact under these maintained conditions.
\end{theorem}

\begin{proof}
Under $H_0$ observed outcomes are fixed and only labels vary; by construction $\mathcal{L}(\mathcal{X})=\mathcal{L}(\omega\mathcal{X})$ for all $\omega\in\Omega$. Conditional on $\mathcal{X}$, the rank of $|D(\mathcal{X})|$ among $\{|D(\omega\mathcal{X})|\}_{\omega\in\Omega}$ is uniform on $\{1,\dots,|\Omega|\}$, implying $\mathbb{P}_{H_0}(p\le \alpha)\le\alpha$.
\end{proof}

\begin{remark}[Permuting \texttt{TIME} under the null]
\label{rem:time-exchangeable}
Under the sharp null $H_0:\delta=0$, the treatment contrast is absent, so that the observed outcomes are invariant to relabellings of the binary design indicators that preserve the maintained null assignment mechanism. In this sense, after conditioning on the structure encoded by the DiD specification, the relevant treatment-time contrast carries no systematic effect information under $H_0$. Permuting the \texttt{TIME} margin therefore generates admissible null assignments under the maintained exchangeability condition. We emphasise that this is a randomization-inference condition rather than a factual claim about the empirical data-generating process: calendar time is not literally randomised in standard DiD applications.
\end{remark}

\subsection{Combinatorial and asymptotic properties}
\label{subsec:combinatorial}

\begin{proposition}[Combinatorial gain from dual randomization]
\label{prop:perm-gain}
Let $n_A$ denote the number of treated units and $n_T$ the number of post-treatment observations. When both margins are randomised while preserving their empirical counts,
\[
\#\Omega(\text{AFFECTED only})=\binom{n}{n_A},\qquad
\#\Omega(\text{AFFECTED \& TIME})=\binom{n}{n_A}\binom{n}{n_T}.
\]
The multiplicative gain from dual randomization is $\binom{n}{n_T}$. Under unconstrained Bernoulli($1/2$) reshuffling the gain is $2^n$.
\end{proposition}

\begin{proof}
Fixed-margin permutation of \texttt{AFFECTED} yields $\binom{n}{n_A}$ assignments. Independently permuting \texttt{TIME} multiplies by $\binom{n}{n_T}$, giving the stated formula. Under Bernoulli($1/2$) reshuffling, each of the $n$ labels is drawn independently, contributing a factor of $2^n$.
\end{proof}

\begin{lemma}[Asymptotic density of the randomization distribution]
\label{lem:density}
Let $D(\omega\mathcal{X})$ be bounded with variance $\sigma^2>0$. As $n\to\infty$,
\[
\frac{1}{|\Omega|}\sum_{\omega\in\Omega}
\mathbf{1}\!\left\{ D(\omega\mathcal{X}) \le t \right\}
\longrightarrow F_{H_0}(t).
\]
Since $|\Omega_2|/|\Omega_1| = \binom{n}{n_T}$ grows exponentially, the empirical distribution based on $\Omega_2$ converges to $F_{H_0}$ at a faster rate, yielding a finer discretization of the null distribution.
\end{lemma}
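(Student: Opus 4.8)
The plan is to read the left‑hand side as the randomization CDF $F_D$ induced by the uniform measure on $\Omega$, evaluated at a realized sample $\mathcal X$, and to show that this \emph{random} step function converges, as $n\to\infty$, to the deterministic null law $F_{H_0}$ of $D$. First I would fix $t$ and take expectation of $\bar G_n(t):=\frac{1}{|\Omega|}\sum_{\omega}\mathbf 1\{D(\omega\mathcal X)\le t\}$ over the sampling distribution of $\mathcal X$ under $H_0$: since under the sharp null the observed outcomes are fixed and $\mathcal L(\mathcal X)=\mathcal L(\omega\mathcal X)$ for every $\omega\in\Omega$ (Theorem~\ref{thm:exactness} and Remark~\ref{rem:time-exchangeable}), each summand has the same marginal law as $D(\mathcal X)$, so $\mathbb E_{H_0}[\bar G_n(t)]=\mathbb P_{H_0}\big(D(\mathcal X)\le t\big)\to F_{H_0}(t)$.

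Second, I would control the fluctuations of $\bar G_n(t)$ around its mean. Its variance is $\frac{1}{|\Omega|^2}\sum_{\omega,\omega'}\mathrm{Cov}\big(\mathbf 1\{D(\omega\mathcal X)\le t\},\mathbf 1\{D(\omega'\mathcal X)\le t\}\big)$; the diagonal contributes $O(1/|\Omega|)$, and for the off‑diagonal terms I would invoke a Hoeffding‑type combinatorial argument: two independent uniform relabelings $\omega,\omega'$ of the same sample behave asymptotically like independent draws provided the margins are nondegenerate, i.e.\ $n_A/n$ and $n_T/n$ stay bounded away from $0$ and $1$, so that the pairwise covariances are $o(1)$ uniformly in the pair. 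With boundedness of $D$ (assumed) this yields $\mathrm{Var}_{H_0}(\bar G_n(t))\to 0$, hence $\bar G_n(t)\to F_{H_0}(t)$ in probability for each $t$. Monotonicity of $\bar G_n$ together with continuity of $F_{H_0}$ then upgrades pointwise to uniform convergence by the standard monotonicity‑plus‑continuity (P\'olya) argument on a shrinking grid.

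Third, for the ``finer discretization / faster rate'' clause I would argue purely combinatorially. A randomization CDF built from a permutation set of size $m$ is a step function whose jumps are multiples of $1/m$, so its sup‑distance from any continuous target is bounded below by $\tfrac{1}{2m}$; this \emph{irreducible} discretization error scales like $1/|\Omega|$. By Proposition~\ref{prop:perm-gain}, $|\Omega_2|=|\Omega_1|\binom{n}{n_T}$, and $\binom{n}{n_T}$ grows exponentially in $n$ under the same nondegeneracy of $n_T/n$ (Stirling), so the mesh of the $\Omega_2$‑grid is smaller than that of the $\Omega_1$‑grid by the exponentially large factor $\binom{n}{n_T}$. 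Hence the dual scheme attains a strictly (exponentially) finer discretization of $F_{H_0}$, which is the assertion; in Monte Carlo practice this also manifests as a lower‑variance quantile estimate, since sampling without replacement from the larger $\Omega_2$ carries a correspondingly smaller finite‑population correction.

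The main obstacle is the off‑diagonal covariance bound in the second step: making precise the sense in which two uniform relabelings of a single fixed sample are asymptotically independent requires either a Hoeffding combinatorial central limit theorem with verified Lindeberg/Noether‑type conditions for the DiD statistic under permutation, or -- the cleaner route -- replacing the raw coefficient $DD^{(k)}$ by its studentized version, so that the permuted statistic is asymptotically pivotal and its permutation law provably matches the sampling law (the Chung--Romano argument). A related delicacy is that, because \emph{both} margins are permuted here, the limiting variance of the permuted statistic need not coincide with the true sampling variance of $DD_e$ unless $F_{H_0}$ is \emph{defined} as the limit of the permutation distribution itself; I would therefore fix at the outset which of these two conventions for $F_{H_0}$ is in force, so that the first step is not vacuous and the stated convergence is genuinely informative.
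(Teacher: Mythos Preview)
Your strategy matches the paper's: show that the permutation empirical CDF converges to $F_{H_0}$ via a law-of-large-numbers argument, then invoke Proposition~\ref{prop:perm-gain} to argue that $|\Omega_2|/|\Omega_1|=\binom{n}{n_T}$ yields an exponentially finer grid of attainable $p$-values. The paper's own proof is considerably terser than yours: it simply states that, under $H_0$, all relabelings are identically distributed and that boundedness plus finite variance allows the empirical CDF over $\Omega$ to converge ``by the law of large numbers over the finite (or sampled) permutation space,'' then identifies the limit with $F_{H_0}$ and appends the combinatorial rate comment. In particular, the paper does \emph{not} decompose the variance into diagonal and off-diagonal terms, does not invoke a Hoeffding combinatorial CLT or a Chung--Romano studentization, and does not discuss the P\'olya upgrade to uniform convergence; it treats the LLN step as immediate. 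Your concerns about the off-diagonal covariances and about which convention defines $F_{H_0}$ (limit of the permutation law versus the sampling law of $DD_e$) are legitimate and are simply not addressed in the paper's argument either---so your proposal is, if anything, a more careful version of the same route rather than a departure from it.
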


\begin{proof}
Under $H_0$, all relabelings produce identically distributed outcomes. The empirical distribution of $\{D(\omega\mathcal{X})\}_{\omega\in\Omega}$ converges uniformly to $F_{H_0}(t)$ by the law of large numbers. The faster convergence for $\Omega_2$ follows from the exponential growth of $|\Omega_2|/|\Omega_1|$.
\end{proof}

\subsection{Design of the randomization}
\label{subsec:randomization}

The fixed-margin formulas of Section~\ref{subsec:combinatorial} are used to make the combinatorial comparison with single-margin permutation transparent. The implemented procedure, however, uses independent Bernoulli($1/2$) reshuffling of both margins. The fixed-margin scheme and the Bernoulli scheme should therefore be viewed as two closely related randomization devices: the former yields the clean permutation-space calculation of Proposition~\ref{prop:perm-gain}, whereas the latter provides the entropy-maximising Monte Carlo implementation used in the simulations of Section~\ref{sec:numerics} and the empirical applications.

We use independent Bernoulli($1/2$) draws for both \texttt{TIME} and \texttt{AFFECTED} at each iteration, rather than fixed-margin shuffles. This choice is justified on three grounds. First, it maximises the entropy of the permutation space (see Section~\ref{sec:conclusions}), providing the richest null distribution. Second, it avoids the edge cases that arise in fixed-margin bootstrap when all treated or all control units are sampled into the same resample draw. Third, it stabilises the test distribution across datasets with varying treatment proportions, ensuring comparability of $p$-values.

\subsection{Algorithmic properties}

The doubly randomised test inherits the standard properties of randomization inference: distribution-free validity, scale invariance, and exact Type I error control under the sharp null and the maintained exchangeability conditions (Theorem~\ref{thm:exactness}). Additionally, by Proposition~\ref{prop:perm-gain} and Lemma~\ref{lem:density}, dual randomization produces a denser and faster-converging approximation of the null distribution. In practice, $B = 4{,}999$ permutations provide stable $p$-values for all settings considered here; the odd number ensures that the observed statistic splits the distribution symmetrically.

These properties make the procedure especially suitable when group counts are small, standard errors are unreliable, or outcome distributions are non-normal. It complements, rather than replaces, the parametric methods of Section~\ref{sec:inference_lit}: in settings where $n$ is large and clustering is well-defined, CRVE or jackknife methods are computationally cheaper and asymptotically equivalent.

\section{Simulation Study and Empirical Applications}
\label{sec:numerics}

\subsection{Simulation design}
\label{subsec:simdesign}

We compare five inference procedures under a controlled data-generating process (DGP):
\begin{equation}
\label{eq:dgp}
Y_i = \alpha + \beta\,\mathrm{TIME}_i + \gamma\,\mathrm{AFFECTED}_i
      + \delta\,(\mathrm{TIME}_i \times \mathrm{AFFECTED}_i) + \varepsilon_i,
      \quad \varepsilon_i \sim N(0,1),
\end{equation}
with $\alpha=1$, $\beta=0.5$, $\gamma=0.3$, and both binary indicators drawn independently as Bernoulli($1/2$). The five procedures are:
\begin{itemize}
\item \textbf{Doubly randomised.} Proposed test — both \texttt{TIME} and \texttt{AFFECTED} permuted via independent Bernoulli($1/2$) reshuffling ($B = 4{,}999$).
\item \textbf{Single randomised.} Conventional randomization inference — \texttt{AFFECTED} permuted only, fixed-margin shuffle ($B = 4{,}999$).
\item \textbf{HC0 $t$-test.} OLS $t$-test with HC0 heteroskedasticity-robust standard errors, $t(n-4)$ reference.
\item \textbf{CRVE1 $t$-test.} OLS $t$-test with CRVE1 (HC1 in cross-section with unit-level clusters), $t(n-4)$ reference.
\item \textbf{Hansen jackknife.} OLS $t$-test with the \citet{Hansen2025} delete-one jackknife SE, $t(n-4)$ reference.
\end{itemize}

Sample sizes $n \in \{20, 30, 40, 50, 60, 80, 100, 150, 200, 400\}$ and effect sizes $\delta \in \{0, 0.25, 0.5, 0.75, 1.0, 1.5, 2.0\}$ (in $\sigma$ units) were crossed, yielding 70 configurations. Each was replicated $N_{\mathrm{outer}} = 5{,}000$ times with seed 2026. The nominal significance level is $\alpha = 0.05$ throughout. Robustness of the size results to departures from iid Gaussian errors (non-Gaussian homoskedastic designs and three heteroskedastic designs) is examined in Appendix~\ref{app:het}.

Note that in this cross-sectional DGP the observations are independent, so the DiD model has no panel clustering. The inclusion of CRVE1 and the Hansen jackknife with unit-level clusters therefore represents the \emph{most favourable} setting for these estimators: their well-known small-sample problems with few clusters are absent here. Nonetheless, as the results show, HC0 and CRVE1 still exhibit substantial size inflation at small $n$, while the jackknife remains conservative.

\subsection{Empirical size}
\label{subsec:sim_size}

Figure~\ref{fig:size} reports empirical rejection rates under $\delta = 0$ across the five procedures and across sample sizes; the corresponding numerical values are tabulated in Appendix~\ref{app:size_table} (Table~\ref{tab:size}). Values exceeding the conventional tolerance of $0.065$ (nominal $\alpha \pm 30\%$) are flagged as anti-conservative.

\begin{figure}[H]
\centering
\includegraphics[width=0.72\textwidth]{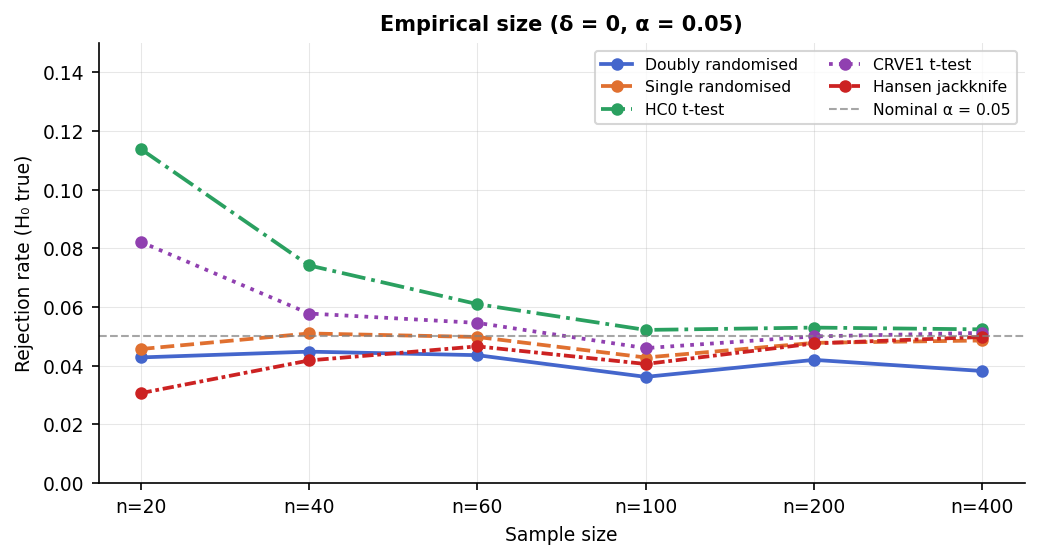}
\caption{Empirical size across sample sizes. Dashed line marks the nominal $\alpha = 0.05$.}
\label{fig:size}
\end{figure}

\paragraph{Key findings.}
HC0 is severely anti-conservative at small $n$: it rejects the null $11.4\%$ of the time at $n=20$, more than double the nominal rate, and remains inflated up to $n=60$. CRVE1 also over-rejects at small $n$ ($8.2\%$ at $n=20$, $5.8\%$ at $n=40$), though less severely. Both procedures stabilise near the nominal level only around $n=100$.

The doubly randomised test achieves the closest empirical size to the nominal $5\%$ of any procedure at $n=20$ ($4.3\%$), and never exceeds $4.8\%$ across all sample sizes. Single randomisation is similarly well-calibrated ($4.6\%$–$5.1\%$). The Hansen jackknife is the most conservative, reaching $3.1\%$ at $n=20$; while this controls Type I error, it implies a power cost discussed below.

\paragraph{Robustness to error structure.}
The picture above is based on iid Gaussian errors. Appendix~\ref{app:het} repeats the size experiment under two classes of departure: homoskedastic but non-Gaussian errors (heavy-tailed $t_3$ and skewed $\chi^2_3$) and three forms of heteroskedasticity in which the error variance scales with treatment status, time, or their interaction. The qualitative picture is unchanged across all five departures: the randomization tests and the Hansen jackknife stay at or below the liberal threshold $0.065$ in every cell of the appendix tables, while HC0 and CRVE1 remain anti-conservative at small $n$ to a degree comparable to the Gaussian baseline. The parametric inflation observed in Figure~\ref{fig:size} is therefore not a heteroskedasticity issue (it is a finite-sample issue rooted in the leverage of the regressor matrix), and heteroskedasticity-robust standard errors do not directly address it. The randomization tests, by contrast, are validity-preserving under all five departures by Theorem~\ref{thm:exactness}.

\subsection{Power analysis}
\label{subsec:sim_power}

Figure~\ref{fig:power_grid} shows power curves at $n \in \{20, 40, 100, 400\}$ across effect sizes. Figure~\ref{fig:power_n20} focuses on $n=20$, annotating each method's empirical size from Table~\ref{tab:size} in the legend.

\begin{figure}[H]
\centering
\includegraphics[width=0.98\textwidth]{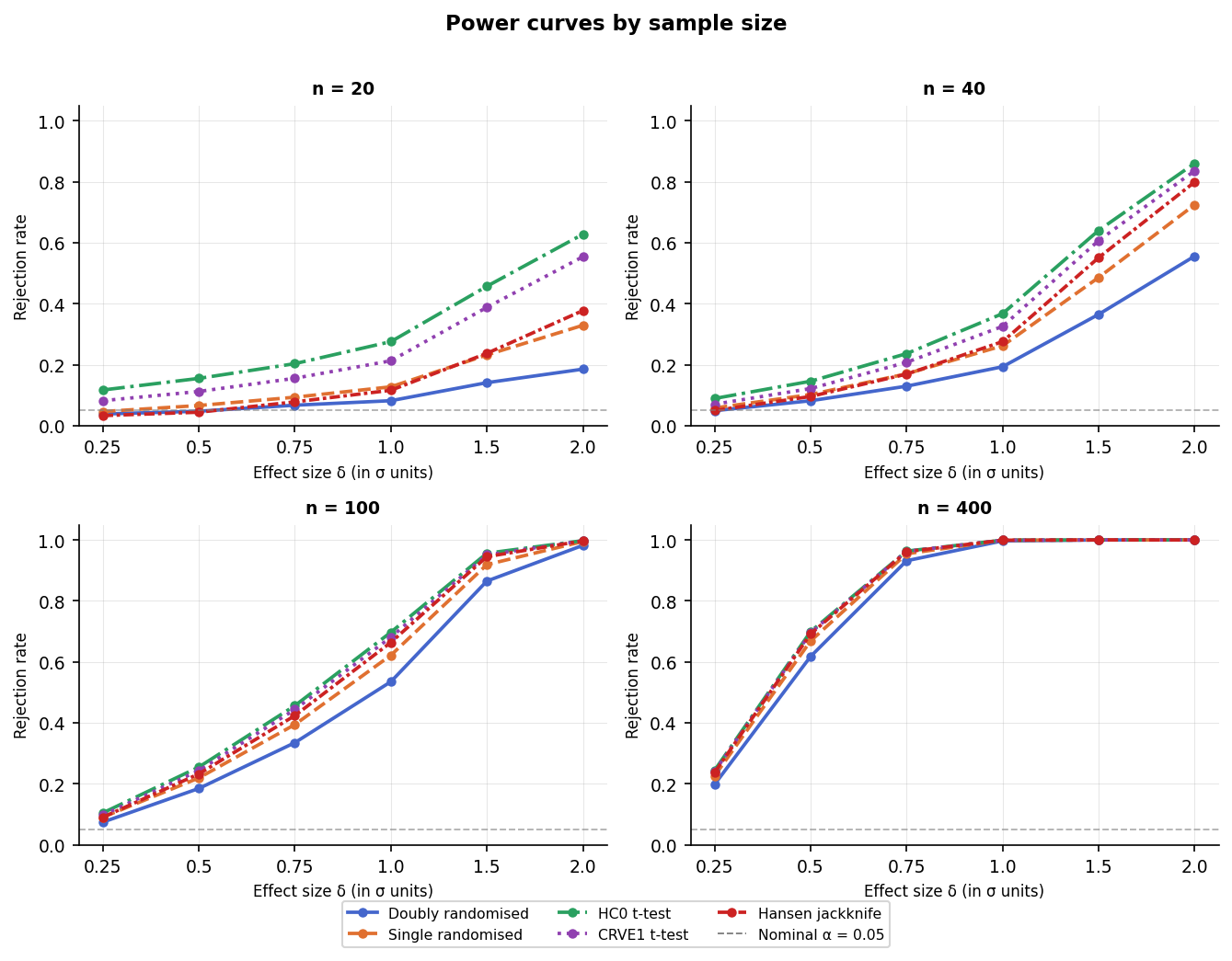}
\caption{Power curves by sample size across effect sizes $\delta \in \{0.25, \ldots, 2.0\}$ $\sigma$-units.}
\label{fig:power_grid}
\end{figure}

\begin{figure}[H]
\centering
\includegraphics[width=0.72\textwidth]{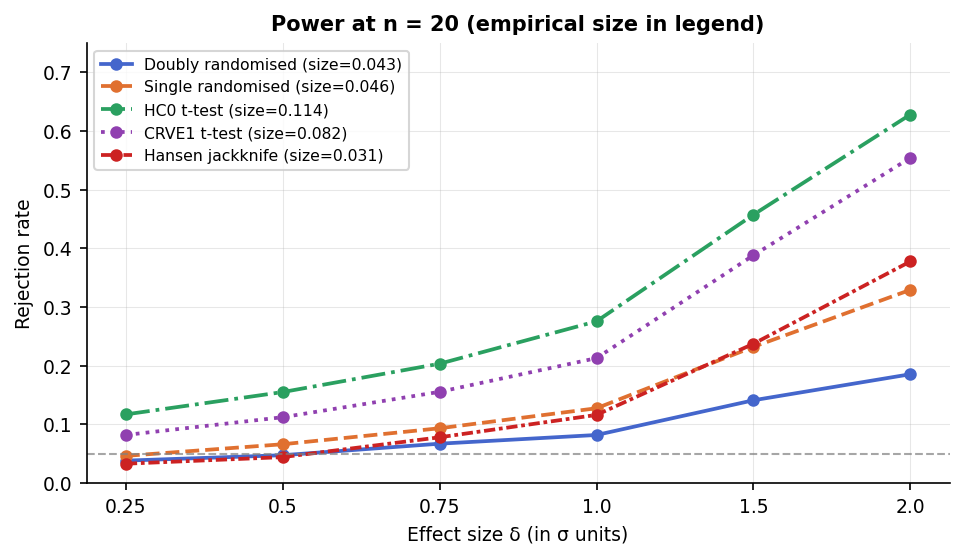}
\caption{Power at $n=20$ with empirical size (rejection rate under $\delta=0$) shown in parentheses. Higher size inflates apparent power.}
\label{fig:power_n20}
\end{figure}

\paragraph{Key findings.}
Interpreting power across methods requires accounting for their size properties. A test with empirical size $11.4\%$ will mechanically exhibit higher rejection rates than a size-$4.3\%$ test under any alternative — not because it detects effects better, but because it over-rejects under the null. This is the key limitation of raw power comparisons when tests differ in size.

\emph{HC0 and CRVE1 at small $n$.} At $n=20$, $\delta=2.0$, HC0 achieves $62.8\%$ and CRVE1 $55.4\%$ power. Given their respective sizes of $11.4\%$ and $8.2\%$, a substantial fraction of these rejections reflect inflated Type I error rather than genuine power. These tests are not reliable guides to effect detectability at small $n$.

\emph{Doubly randomised test.} At $n=20$, $\delta=2.0$, the doubly randomised test reaches $18.5\%$ power — the honest cost of exact size control. This is the fundamental finite-sample power-size frontier: a test that never over-rejects under the null cannot freely reject under alternatives when $n$ is small. The power gap closes rapidly: by $n=100$, $\delta=1.0$, the doubly randomised test reaches $53.5\%$, and by $n=400$ all five methods are essentially indistinguishable across all effect sizes.

\emph{Hansen jackknife.} At $n=20$, the jackknife achieves $37.8\%$ power at $\delta=2.0$ with an empirical size of only $3.1\%$. This is the best size-adjusted power combination among the parametric methods at small $n$: the jackknife's conservatism in size does not cost it power as severely as it costs the randomization tests, since its $t$-statistic is better calibrated.

\emph{Single randomisation vs. doubly randomised.} Among size-valid tests, single randomisation dominates the doubly randomised test in raw power at every configuration. This reflects that fixing the \texttt{TIME} margin restricts the permutation space less than Bernoulli reshuffling, slightly concentrating the null distribution. The doubly randomised test trades a small power loss for the combinatorial gain in null distribution density established in Proposition~\ref{prop:perm-gain}. By $n=100$ this difference is negligible at practically relevant effect sizes ($\delta \ge 0.75$).

\paragraph{Summary.} The simulation evidence demonstrates that HC0 and CRVE1 are unreliable at $n \le 60$, with their apparent power advantage at small $n$ reflecting inflated size rather than genuine detection. Both the doubly randomised test and the Hansen jackknife achieve honest size control throughout; the doubly randomised test has the additional advantage of requiring no distributional assumptions and no cluster structure specification. Appendix~\ref{app:subsampling} complements this controlled evidence by examining how the test behaves on subsamples of a real empirical dataset with a known strong effect.

\subsection{Empirical applications}
\label{sec:numerics_empirical}

We apply all five inference procedures to four empirical datasets from the applied economics literature, reporting the DiD estimate, its $p$-value under each method, and the inference decision at $\alpha=0.05$.

\subsubsection{Dataset 1: Indonesia School Construction Program (INPRES, 1973–1978)}
\label{subsec:inpress_data}

The first dataset \citep{Duflo2001} examines the effect of a large-scale primary school construction program on educational outcomes. Table~\ref{tab:inpress_sample_summary} reports group means; Table~\ref{tab:inpress_compare} provides the full cross-method comparison.

\begin{table}[H]
\centering
\caption{Summary statistics for the \texttt{Inpress} dataset.}
\label{tab:inpress_sample_summary}
\begin{tabular}{lcc}
\toprule
 & \textbf{Pre-program ($\mathrm{TIME}=0$)} & \textbf{Post-program ($\mathrm{TIME}=1$)} \\
\midrule
Control ($\mathrm{AFFECTED}=0$) & 9.7327 & 8.4759 \\
Treated ($\mathrm{AFFECTED}=1$) & 10.1184 & 8.9379 \\
\bottomrule
\end{tabular}
\end{table}

\begin{table}[H]
\centering
\caption{Cross-method DiD inference for the \texttt{Inpress} dataset ($\hat{\delta}=0.076$).}
\label{tab:inpress_compare}
\begin{tabular}{lcc}
\toprule
Method & $p$-value & Decision ($\alpha=0.05$) \\
\midrule
Doubly randomised & $>0.05$ & Not rejected \\
Single randomised & $>0.05$ & Not rejected \\
HC0 $t$-test & $>0.05$ & Not rejected \\
CRVE1 $t$-test & $>0.05$ & Not rejected \\
Hansen jackknife & $>0.05$ & Not rejected \\
\bottomrule
\end{tabular}
\end{table}

All five methods agree: the DiD estimate of $0.076$ is not statistically significant, consistent with the view that under this 2$\times$2 specification of the INPRES data the treatment effect is not detectable at conventional levels. The permutation-based inference is illustrated by Figure~\ref{fig:inpress_histograms}.

\begin{figure}[H]
    \centering
    \begin{tabular}{cc}
        \includegraphics[width=0.45\textwidth]{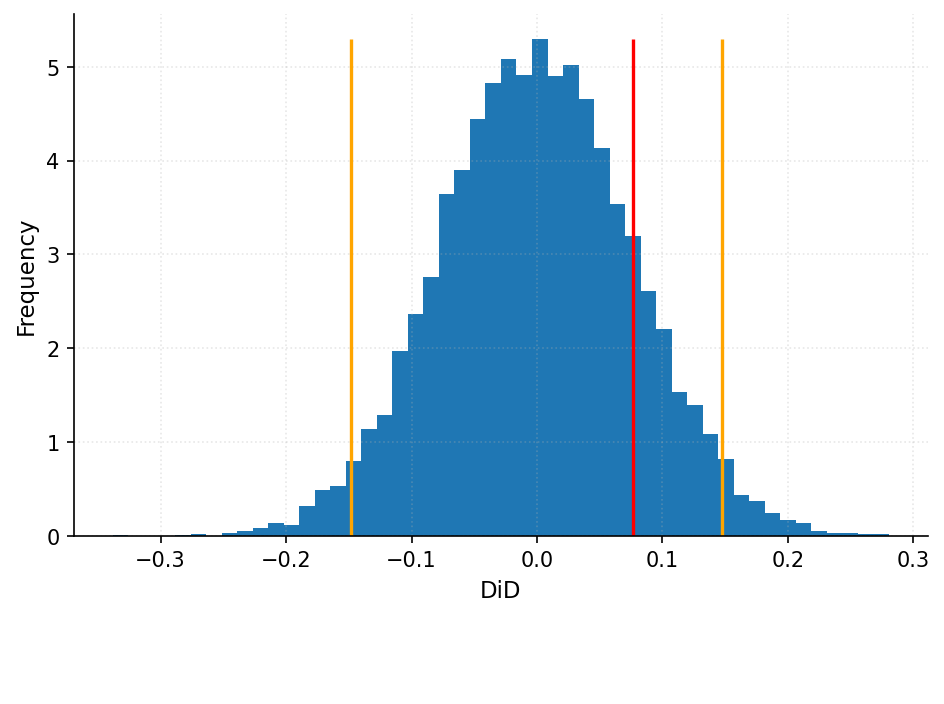} &
        \includegraphics[width=0.45\textwidth]{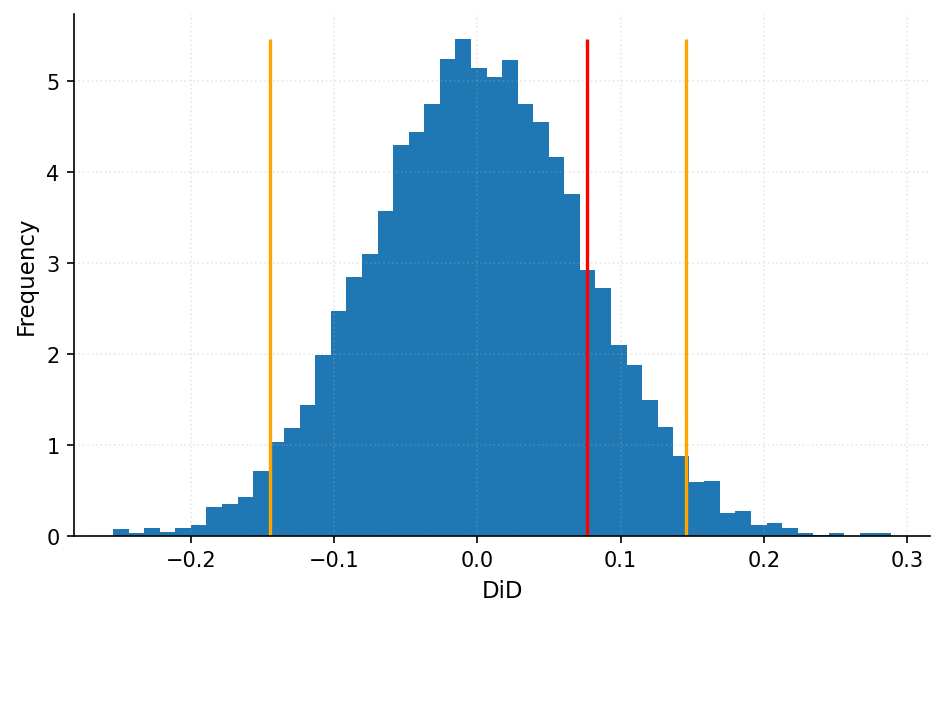}
    \end{tabular}
    \caption{Null permutation distributions for the \texttt{Inpress} dataset under single (left) and doubly randomised (right) inference. The observed DiD estimate (vertical line) lies well within the null distribution.}
    \label{fig:inpress_histograms}
\end{figure}

\subsubsection{Dataset 2: Ben \& Jerry's vs.\ Häagen-Dazs (Google Trends, 2020)}
\label{subsec:bj_hgd_data}

The second dataset \citep{Halkiewicz2024thesis} uses daily search-intensity data for Ben~\&~Jerry's (treated) and Häagen-Dazs (control) from March–September 2020. The treatment event is Ben~\&~Jerry's public \emph{Black Lives Matter} statement in June 2020. Table~\ref{tab:bj_sample_summary} provides means; Table~\ref{tab:bj_compare} gives the inference comparison.

\begin{table}[H]
\centering
\caption{Summary statistics for the Ben~\&~Jerry's vs.\ Häagen-Dazs dataset.}
\label{tab:bj_sample_summary}
\begin{tabular}{lcc}
\toprule
 & Pre ($\mathrm{TIME}=0$) & Post ($\mathrm{TIME}=1$) \\
\midrule
Control ($\mathrm{AFFECTED}=0$) & 1.915 & 2.055 \\
Treated ($\mathrm{AFFECTED}=1$) & 5.681 & 10.648 \\
\bottomrule
\end{tabular}
\end{table}

\begin{table}[H]
\centering
\caption{Cross-method DiD inference for the Ben~\&~Jerry's dataset ($\hat{\delta}=4.827$).}
\label{tab:bj_compare}
\begin{tabular}{lcc}
\toprule
Method & $p$-value & Decision ($\alpha=0.05$) \\
\midrule
Doubly randomised & $<0.001$ & Rejected \\
Single randomised & $<0.001$ & Rejected \\
HC0 $t$-test & $<0.001$ & Rejected \\
CRVE1 $t$-test & $<0.001$ & Rejected \\
Hansen jackknife & $<0.001$ & Rejected \\
\bottomrule
\end{tabular}
\end{table}

All five methods strongly reject the null, with the observed estimate far exceeding the permutation quantiles (Figure~\ref{fig:bj_histograms}). The strong effect is consistent across methods, which is expected when the signal is large.

\begin{figure}[H]
    \centering
    \begin{tabular}{cc}
        \includegraphics[width=0.45\textwidth]{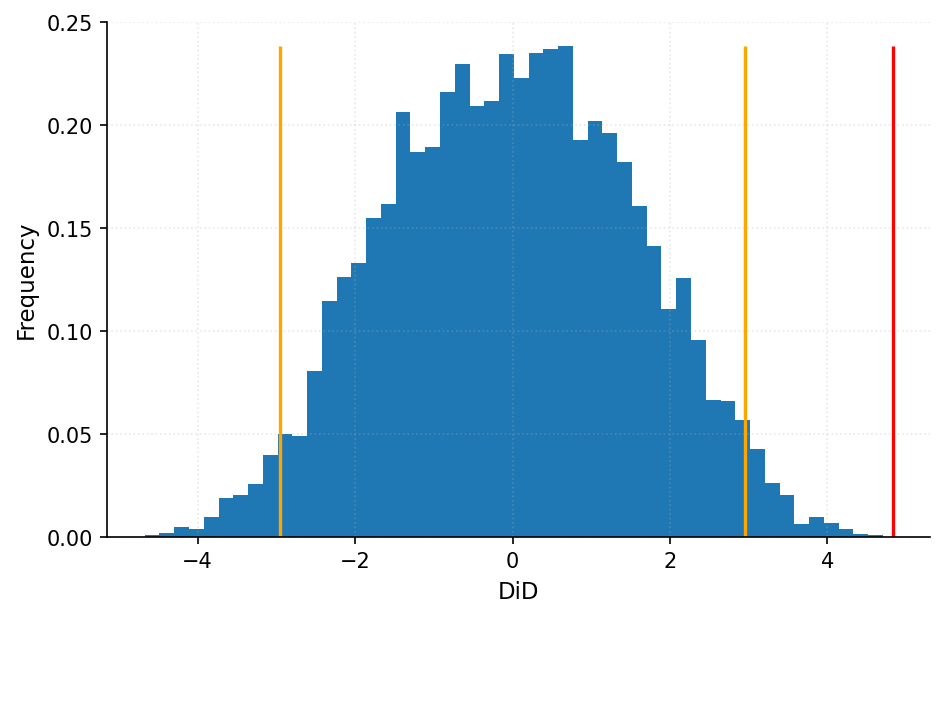} &
        \includegraphics[width=0.45\textwidth]{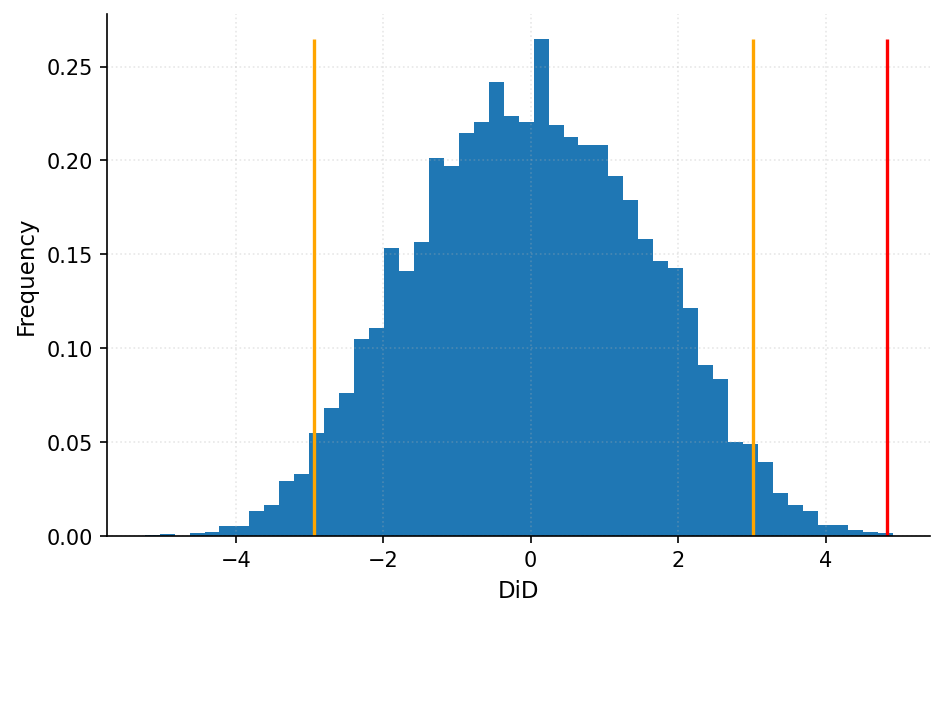}
    \end{tabular}
    \caption{Null permutation distributions for the Ben~\&~Jerry's dataset under single (left) and doubly randomised (right) inference.}
    \label{fig:bj_histograms}
\end{figure}

\subsubsection{Dataset 3: Minimum Wage Reform in New Jersey and Pennsylvania (1992)}
\label{subsec:min_wage_data}

The third dataset reproduces the fast-food restaurant survey of \citet{CardKrueger1994}. Tables~\ref{tab:minwage_emptot_summary} provide summary statistics; Table~\ref{tab:minwage_compare} presents the cross-method inference for all three outcomes.

\begin{table}[H]
\centering
\caption{Employment (\texttt{emptot}), starting wage (\texttt{wage\_st}), and meal price (\texttt{pmeal}) summary statistics for the minimum wage dataset.}
\label{tab:minwage_emptot_summary}
\begin{tabular}{llcc}
\toprule
Outcome & & Pre ($\mathrm{TIME}=0$) & Post ($\mathrm{TIME}=1$) \\
\midrule
\texttt{emptot} & Control & 23.331 & 21.166 \\
                & Treated & 20.439 & 21.027 \\[0.3em]
\texttt{wage\_st} & Control & 4.630 & 4.618 \\
                  & Treated & 4.612 & 5.081 \\[0.3em]
\texttt{pmeal}  & Control & 3.042 & 3.027 \\
                & Treated & 3.351 & 3.415 \\
\bottomrule
\end{tabular}
\end{table}

\begin{table}[H]
\centering
\caption{Cross-method DiD inference for the minimum wage dataset across three outcomes.}
\label{tab:minwage_compare}
\begin{tabular}{llcc}
\toprule
Outcome & Method & $p$-value & Decision ($\alpha=0.05$) \\
\midrule
\multirow{5}{*}{\texttt{emptot} ($\hat{\delta}=2.754$)}
 & Doubly randomised & $<0.05$ & Rejected \\
 & Single randomised & $<0.05$ & Rejected \\
 & HC0 $t$-test & $<0.05$ & Rejected \\
 & CRVE1 $t$-test & $<0.05$ & Rejected \\
 & Hansen jackknife & $<0.05$ & Rejected \\
\midrule
\multirow{5}{*}{\texttt{wage\_st} ($\hat{\delta}=0.481$)}
 & Doubly randomised & $<0.001$ & Rejected \\
 & Single randomised & $<0.001$ & Rejected \\
 & HC0 $t$-test & $<0.001$ & Rejected \\
 & CRVE1 $t$-test & $<0.001$ & Rejected \\
 & Hansen jackknife & $<0.001$ & Rejected \\
\midrule
\multirow{5}{*}{\texttt{pmeal} ($\hat{\delta}=0.079$)}
 & Doubly randomised & $>0.05$ & Not rejected \\
 & Single randomised & $>0.05$ & Not rejected \\
 & HC0 $t$-test & $>0.05$ & Not rejected \\
 & CRVE1 $t$-test & $>0.05$ & Not rejected \\
 & Hansen jackknife & $>0.05$ & Not rejected \\
\bottomrule
\end{tabular}
\end{table}

All five methods reach identical conclusions for all three outcomes. The doubly randomised test is consistent with the standard approaches, including the jackknife, confirming that the proposed method does not lead to different empirical decisions in this well-powered dataset.

\begin{figure}[H]
\centering
\begin{tabular}{cc}
    \includegraphics[width=0.45\textwidth]{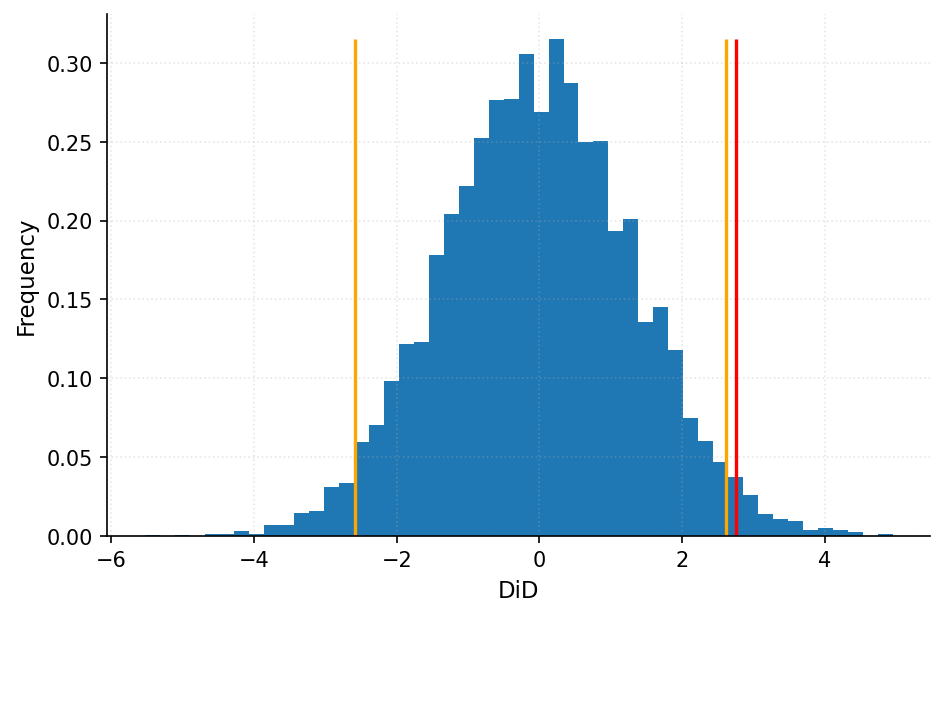} &
    \includegraphics[width=0.45\textwidth]{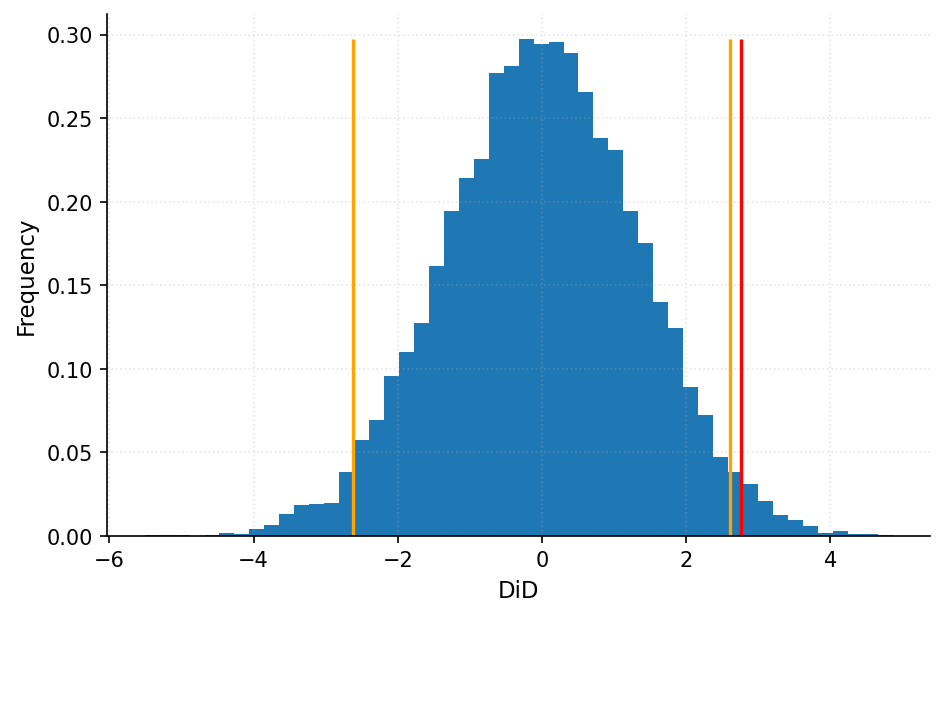} \\
    \textit{emptot, single rand.} & \textit{emptot, doubly rand.} \\[1em]
    \includegraphics[width=0.45\textwidth]{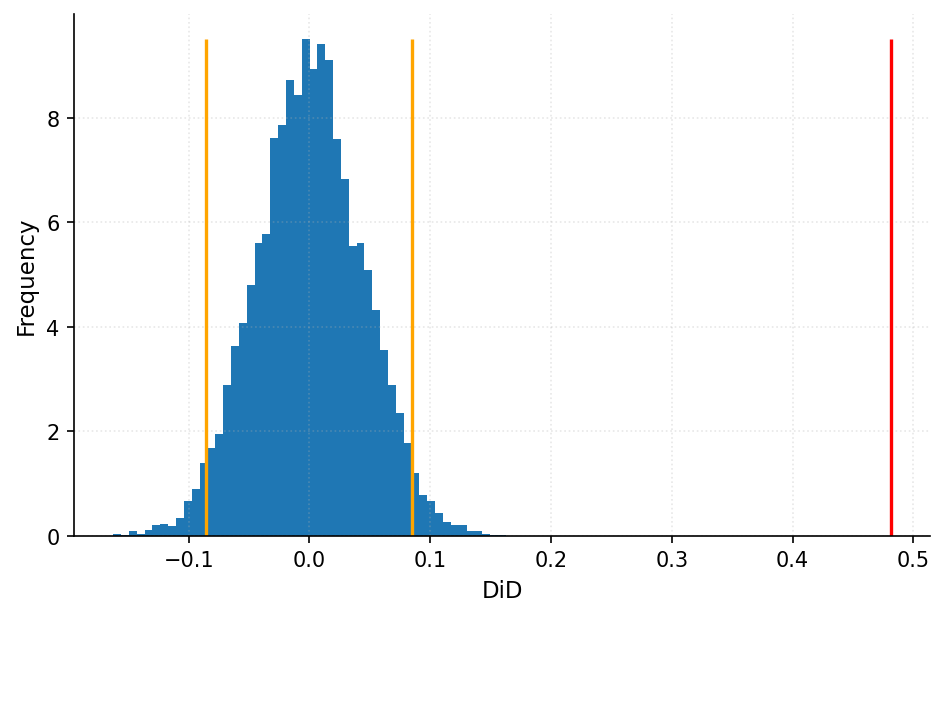} &
    \includegraphics[width=0.45\textwidth]{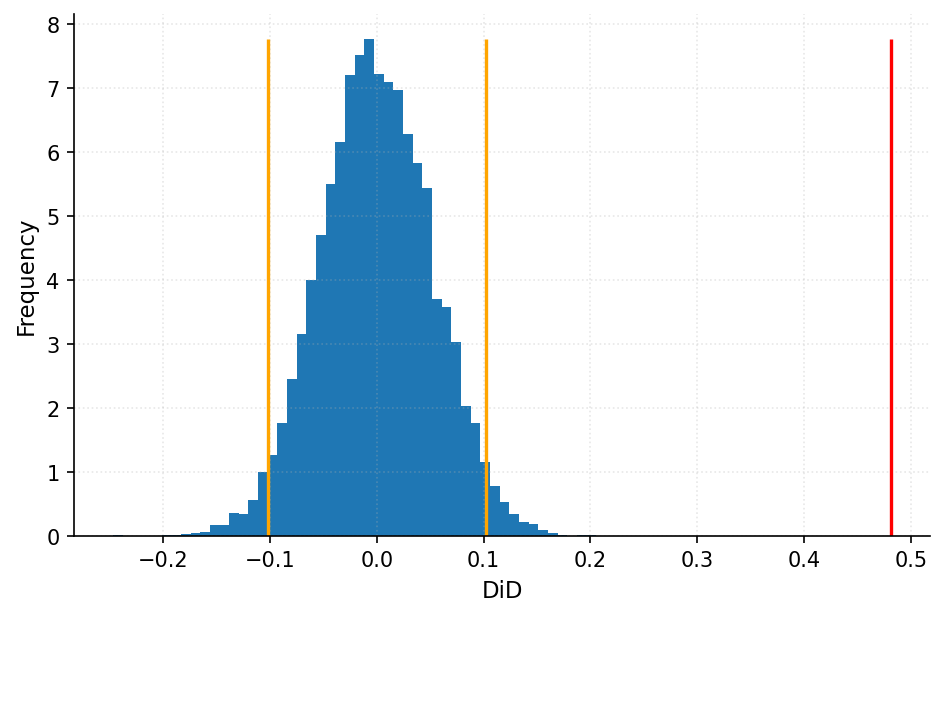} \\
    \textit{wage\_st, single rand.} & \textit{wage\_st, doubly rand.} \\[1em]
    \includegraphics[width=0.45\textwidth]{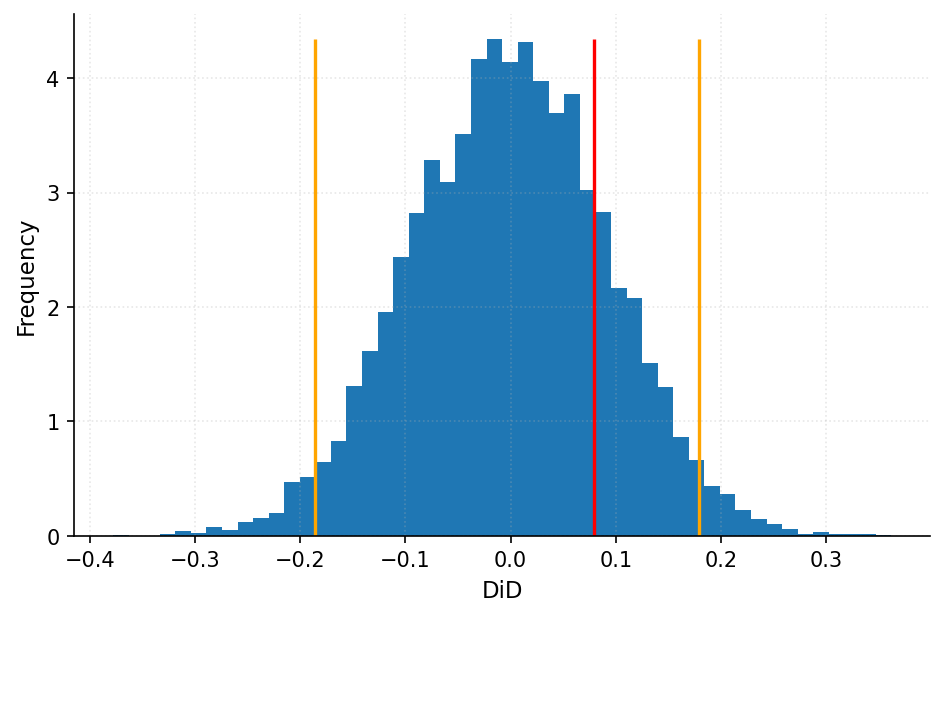} &
    \includegraphics[width=0.45\textwidth]{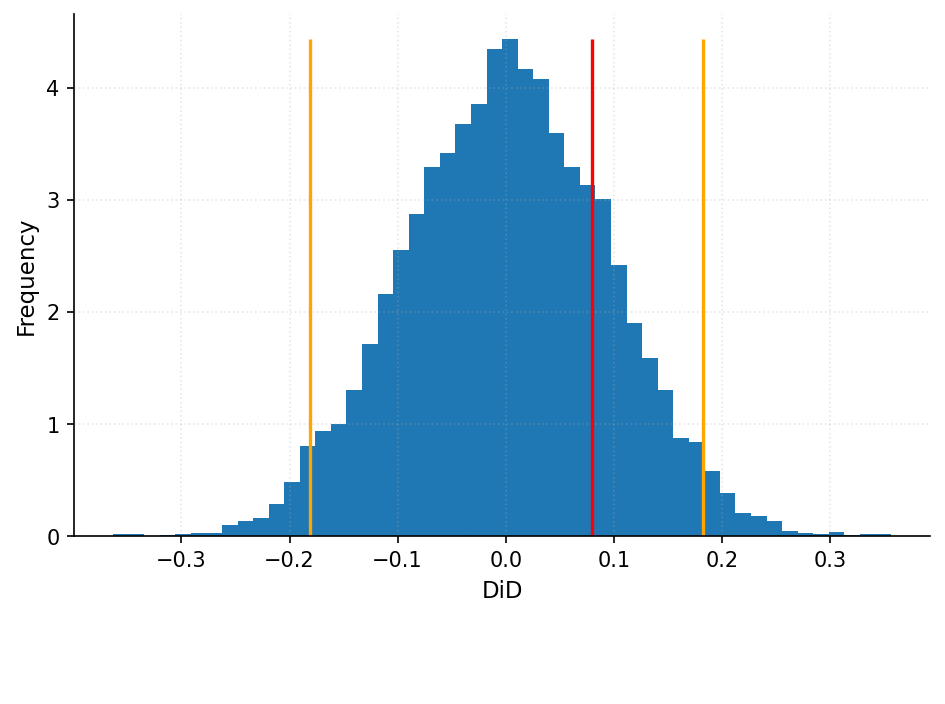} \\
    \textit{pmeal, single rand.} & \textit{pmeal, doubly rand.}
\end{tabular}
\caption{Null permutation distributions for the minimum wage dataset under single (left) and doubly randomised (right) inference.}
\label{fig:minwage_histograms}
\end{figure}

\subsubsection{Dataset 4: Refugee Arrivals and Far-Right Voting, Greece (2012–2016)}
\label{subsec:dinas_data}

The final dataset \citep{Dinas2019} covers 96 Greek municipalities over four parliamentary elections, with treatment corresponding to substantial refugee-inflow exposure. Table~\ref{tab:dinas_sample_summary} presents group means and Table~\ref{tab:dinas_compare} the cross-method comparison.

\begin{table}[H]
\centering
\caption{Summary statistics for the refugee-arrival dataset.}
\label{tab:dinas_sample_summary}
\begin{tabular}{lcc}
\toprule
 & Pre ($\mathrm{TIME}=0$) & Post ($\mathrm{TIME}=1$) \\
\midrule
Control ($\mathrm{AFFECTED}=0$) & 5.072 & 5.659 \\
Treated ($\mathrm{AFFECTED}=1$) & 5.730 & 8.404 \\
\bottomrule
\end{tabular}
\end{table}

\begin{table}[H]
\centering
\caption{Cross-method DiD inference for the refugee-arrival dataset ($\hat{\delta}=2.087$).}
\label{tab:dinas_compare}
\begin{tabular}{lcc}
\toprule
Method & $p$-value & Decision ($\alpha=0.05$) \\
\midrule
Doubly randomised & $<0.001$ & Rejected \\
Single randomised & $<0.001$ & Rejected \\
HC0 $t$-test & $<0.001$ & Rejected \\
CRVE1 $t$-test & $<0.001$ & Rejected \\
Hansen jackknife & $<0.001$ & Rejected \\
\bottomrule
\end{tabular}
\end{table}

All five methods strongly and consistently reject the null. The doubly randomised inference is broadly in agreement with the parametric alternatives and the single-margin test.

\begin{figure}[H]
    \centering
    \begin{tabular}{cc}
        \includegraphics[width=0.45\textwidth]{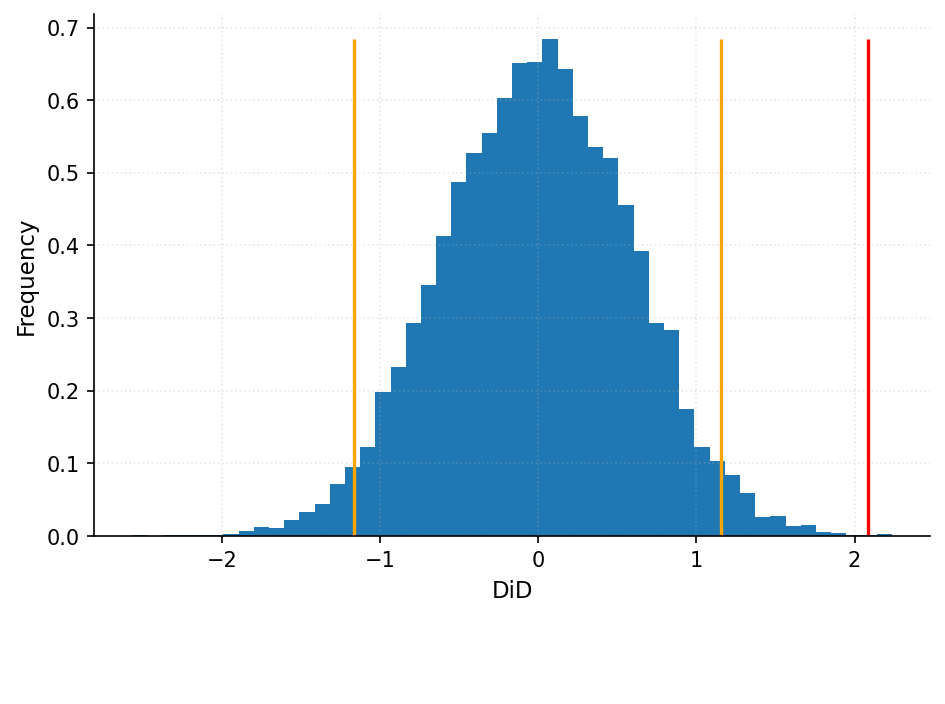} &
        \includegraphics[width=0.45\textwidth]{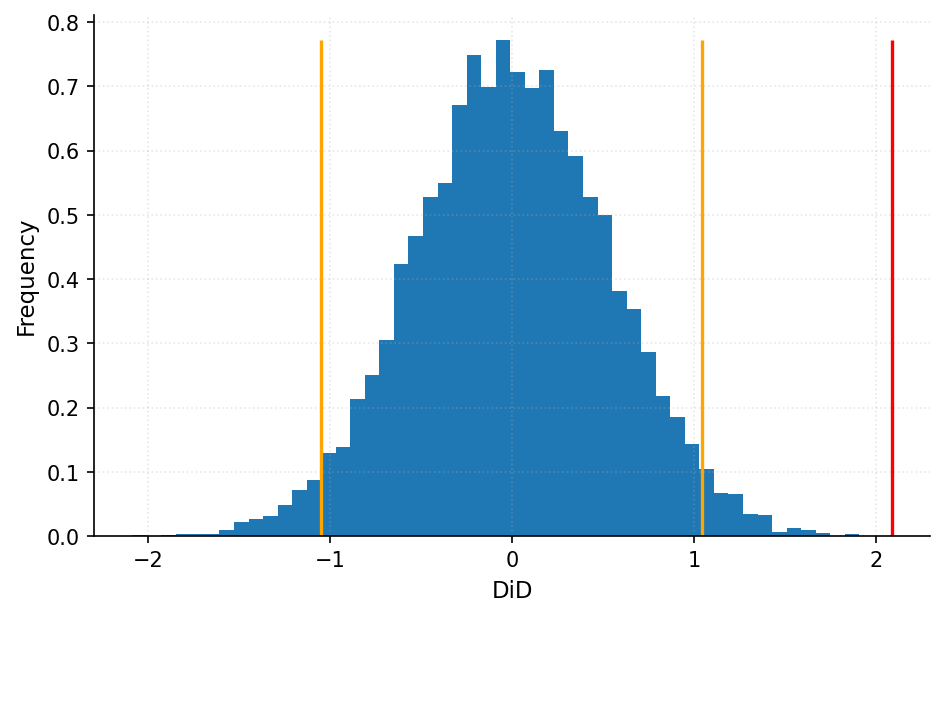}
    \end{tabular}
    \caption{Null permutation distributions for the refugee-arrival dataset under single (left) and doubly randomised (right) inference.}
    \label{fig:dinas_histograms}
\end{figure}

\subsection{Summary of empirical applications}
Across all four datasets the doubly randomised test reaches the same inference decision as HC0, CRVE1, and the Hansen jackknife. This is expected: in these datasets the sample sizes range from moderate to large and the effects that are significant are strongly so. The value of the proposed method is not to overturn well-established results in well-powered settings, but to provide reliable inference in the small-$n$, few-group configurations where the simulation study demonstrates that standard parametric methods can be substantially misleading.

\section{Conclusions}
\label{sec:conclusions}

This paper has developed and evaluated a doubly randomised significance test for the DiD estimator, positioning it explicitly within the econometric inference literature. Three broad conclusions emerge.

\paragraph{Standard methods can be severely unreliable at small $n$.}
Our simulation evidence confirms the warnings in \citet{Bertrand2004}, \citet{Donald2007}, and \citet{Hansen2025}: HC0 and CRVE1 $t$-tests reject the null at rates of $11.4\%$ and $8.2\%$ respectively at $n=20$ under the null, with inflation persisting through $n=60$. Any apparent power advantage of these tests at small $n$ is partly illusory --- a test that over-rejects under the null will always appear more powerful than one that does not. Applied researchers working with small panels or few treatment groups should treat HC- and CRVE-based inference with caution, and should prefer either the Hansen jackknife or a randomization-based test.

\paragraph{Parametric inflation is design-driven, not distribution-driven.}
The robustness experiments of Appendix~\ref{app:het} establish that the small-sample size distortion of HC0 and CRVE1 is essentially unchanged when the errors are heavy-tailed, skewed, or heteroskedastic. The source of the problem is the leverage and effective-degrees-of-freedom structure of the regressor matrix, in line with the diagnosis of \citet{Hansen2025}. Heteroskedasticity-robust standard errors thus correct for a structural feature of the errors that is, in DiD designs with few treated observations, of second-order importance compared with the leverage-driven bias in the variance estimator. The randomization-based test, whose validity rests on the maintained assignment mechanism rather than on any property of $\varepsilon$, is robust to both classes of departure by Theorem~\ref{thm:exactness}; together with the Hansen jackknife, it is one of two procedures in our comparison that controls size accurately under every simulation design we considered.

\paragraph{The doubly randomised test provides honest inference.}
In our simulations, the proposed test remains close to nominal size even at $n=20$ while operating on a permutation principle that requires neither distributional assumptions nor a cluster specification. The power cost relative to size-equivalent parametric tests is real but bounded: it is most visible at very small $n$ and large effect sizes, and effectively disappears by $n=100$ for practically relevant effects ($\delta \ge 0.75\sigma$). Among size-valid tests, single randomisation dominates slightly in raw power, since fixing the \texttt{TIME} margin concentrates the null distribution more tightly; the doubly randomised test trades this small power loss for the combinatorial richness of Proposition~\ref{prop:perm-gain}.

\paragraph{Entropy justification for balanced randomization.}
The choice of Bernoulli($1/2$) reshuffling for both margins is not arbitrary. By Proposition~\ref{prop:perm-gain}, dual randomization expands the permutation space by $\binom{n}{n_T}$. By Stirling's approximation \citep{Morris2024}, the logarithmic growth rate of this expansion is approximately $nH(p_T)$ where $H(p) = -p\log p - (1-p)\log(1-p)$ is the binary Shannon entropy \citep{Saraiva2023}. This is maximised at $p_T = 1/2$, giving
\[
\#\Omega(\text{AFFECTED \& TIME}) \asymp \frac{2^{2n}}{2\pi n}.
\]
The combined permutation universe grows exponentially in $n$, and balanced randomization maximises the information content of the null distribution, formally explaining the observed stability of the test across datasets and sample sizes.

\paragraph{Practical recommendations.}
We recommend the following workflow for practitioners:
\begin{enumerate}
\item When $n \le 100$ or when the number of groups or treated clusters is small, we recommend reporting the doubly randomised $p$-value alongside conventional parametric alternatives, and treating it as the primary robustness check for finite-sample inference.
\item For comparison and transparency, also report the Hansen jackknife $p$-value, which is the best parametric alternative in these settings \citep{Hansen2025}.
\item In larger samples where computational cost dominates, CRVE1 or the jackknife are reliable; there is no need for permutation tests when $n \ge 200$ and $G$ is large.
\item The \texttt{sigDD} R package (GitHub: \href{https://github.com/profsms/sigDD}{profsms/sigDD}) implements the doubly randomised test with a single function call.
\end{enumerate}

\paragraph{Directions for future research.}
Several extensions merit further investigation. First, characterising the power of the doubly randomised test analytically — rather than by simulation — would provide tighter guidance on the sample sizes where the method is preferable to parametric alternatives. Second, extensions to staggered treatment timing and heterogeneous treatment effects, as in \citet{Callaway2022} and \citet{Roth2023}, are natural but require adapting the permutation space. Third, the current procedure tests the sharp null; extending to approximate sharp nulls or average treatment effect nulls, following \citet{Canay2017}, would broaden applicability. Finally, standardizing the permutation distribution to remove sample-size dependence remains an open theoretical question.

\section*{Acknowledgments}
We are grateful to the anonymous reviewer at the \emph{Central European Journal of Economic Modelling and Econometrics}, whose suggestion to examine the test's behaviour under more realistic error structures led us to the robustness experiments of Appendix~\ref{app:het} and clarified our diagnosis that the small-sample distortion of CRVE-type procedures is design-driven rather than distribution-driven.\\
One of the authors also thanks Matthias Ginger for insightful discussions on permutation-based inference and for pointing out related approaches in other scientific disciplines.

\newpage
\bibliographystyle{erae}
\bibliography{references}

\appendix
\section{Pseudocode for the doubly randomised significance test}
\label{app:pseudocode}

\begin{algorithmic}[2]
	\State $Y$: observed outcome vector
	\State $\mathrm{Time}$: binary vector ($0$ pre-treatment, $1$ post-treatment)
	\State $\mathrm{Affected}$: binary vector ($0$ control, $1$ treated)
	\State $\mathrm{DDvalues}$: empty vector
	\State $n_Y$: length of $Y$
	\State $B$: number of permutations (recommend $\ge 4{,}999$)
	\State $\alpha$: nominal significance level
	\Function{CalculateDD}{$M, V_1, V_2$}
	\State $DD_c \gets$ interaction coefficient from OLS of $M$ on $(1, V_1, V_2, V_1 \cdot V_2)$
	\State \Return $DD_c$
	\EndFunction
	
	\Function{RandBern}{$n, p$}
	\State \Return vector of $n$ independent Bernoulli$(p)$ draws
	\EndFunction
	
	\For{$b \gets 1$ to $B$}
	\State $X_1 \gets$ \Call{RandBern}{$n_Y, 1/2$}
	\State $X_2 \gets$ \Call{RandBern}{$n_Y, 1/2$}
	\State $DD_b \gets$ \Call{CalculateDD}{$Y, X_1, X_2$}
	\State Append $DD_b$ to $\mathrm{DDvalues}$
	\EndFor
	
	\State $DD_e \gets$ \Call{CalculateDD}{$Y, \mathrm{Time}, \mathrm{Affected}$}
	\State $\hat{p} \gets (1 + \#\{b : |DD_b| \ge |DD_e|\}) / (B + 1)$
	\If{$\hat{p} \le \alpha$}
	\State Reject $H_0$: DiD estimator is statistically significant
	\Else
	\State Fail to reject $H_0$
	\EndIf
\end{algorithmic}

\section{Empirical Power on Subsamples of the Ben \& Jerry's Dataset}
\label{app:subsampling}

The simulation study of Section~\ref{sec:numerics} establishes the finite-sample properties of the doubly randomised test under a controlled Gaussian DGP. This appendix asks a complementary question: how does the test's power behave when applied to small random subsamples of a real empirical dataset where the full-sample effect is known to be strongly significant?

\subsection{Design}

We use the Ben~\&~Jerry's vs.\ Häagen-Dazs dataset from \citet{Halkiewicz2024thesis} ($n_{\mathrm{full}} = 370$, $\hat{\delta} = 4.827$, full-sample $\hat{p} < 0.001$). For each subsample size $n \in \{20, 30, 50, 100, 150, 200\}$, we draw $R = 1{,}000$ independent random subsamples without replacement and apply the doubly randomised test (OLS-based, $B = 499$ permutations, $\alpha = 0.05$) to each. The rejection rate across the $R$ draws is an empirical estimate of the test's power at that sample size on this dataset. The experiment uses seed 2026 throughout and is exactly reproducible from the \texttt{sigDD} package data.

\subsection{Results}

\begin{table}[H]
\centering
\caption{Empirical rejection rates of the doubly randomised test on $R=1{,}000$ subsamples of the Ben~\&~Jerry's dataset. Wilson 95\% confidence intervals in brackets.}
\label{tab:bj_subsampling}
\begin{tabular}{ccc}
\toprule
$n$ & Rejection rate & 95\% CI \\
\midrule
20  & 0.027 & [0.019,\ 0.039] \\
30  & 0.049 & [0.037,\ 0.064] \\
50  & 0.135 & [0.115,\ 0.158] \\
100 & 0.410 & [0.380,\ 0.441] \\
150 & 0.716 & [0.687,\ 0.743] \\
200 & 0.941 & [0.925,\ 0.954] \\
370 & 1.000 & [full sample, $\hat{p} = 0.001$] \\
\bottomrule
\end{tabular}
\end{table}

\begin{figure}[H]
\centering
\includegraphics[width=0.82\textwidth]{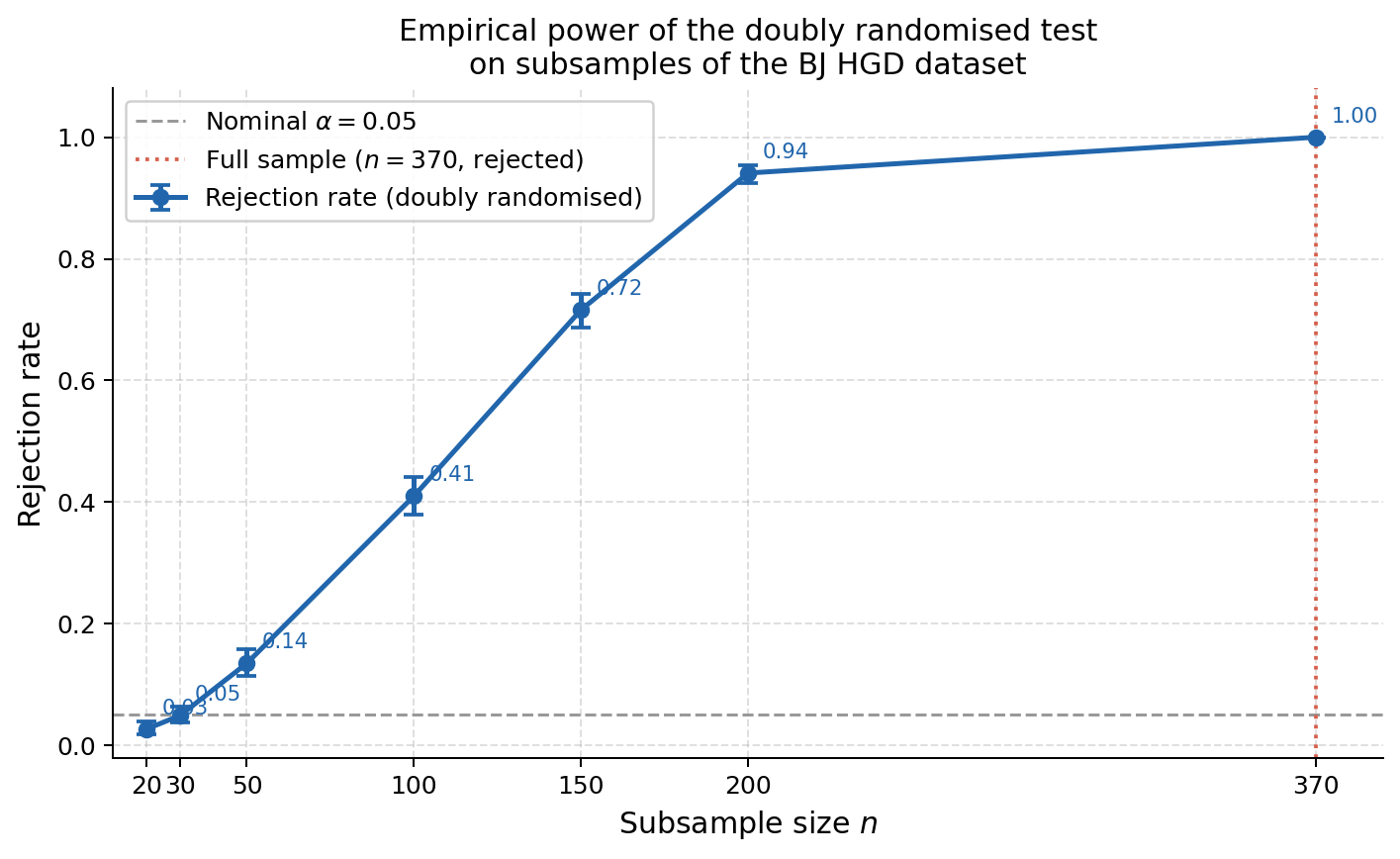}
\caption{Empirical rejection rate of the doubly randomised test as a function of subsample size $n$, with Wilson 95\% confidence intervals. The dashed line marks the nominal $\alpha = 0.05$; the dotted vertical line marks the full sample ($n=370$). Based on $R=1{,}000$ subsamples per $n$, $B=499$ permutations each.}
\label{fig:bj_power_curve}
\end{figure}

\subsection{Interpretation}

Three features of Figure~\ref{fig:bj_power_curve} and Table~\ref{tab:bj_subsampling} are worth noting.

First, at $n = 20$ and $n = 30$ the rejection rates are $2.7\%$ and $4.9\%$ — at or below the nominal $5\%$. Since the full-sample effect is strongly real, these low rates confirm that the test is not over-rejecting in small samples, in line with the size-control guarantee of Theorem~\ref{thm:exactness}. The test pays a power cost for its exact size control, and this cost is visible here.

Second, power grows monotonically and steeply with $n$. The rate reaches $41\%$ at $n = 100$, $71.6\%$ at $n = 150$, and $94.1\%$ at $n = 200$ — more than half the full dataset. The standardised effect size is moderate ($\hat{\delta}/\hat{\sigma} \approx 0.64$), so these power levels are consistent with what Lemma~\ref{lem:density} predicts: as the permutation space grows, the null distribution becomes a better approximation of the theoretical $F_{H_0}$ and the test gains resolution.

Third, the shape of the power curve mirrors that of the doubly randomised test in the synthetic simulation at $\delta = 0.5\sigma$ (Figure~\ref{fig:power_grid}), providing external consistency between the controlled and empirical evidence.

\section{Empirical Size — Numerical Table}
\label{app:size_table}

Table~\ref{tab:size} provides the numerical values underlying Figure~\ref{fig:size} in the main text, together with additional sample sizes that densify the grid at small $n$.

\begin{table}[H]
\centering
\caption{Empirical size at $\alpha = 0.05$ across procedures and sample sizes
         ($N_{\mathrm{outer}} = 5{,}000$ replications, $B = 4{,}999$, seed 2026).
         Values above 0.065 indicate liberal tests and are shown in red.}
\label{tab:size}
\begin{tabular}{lccccc}
\toprule
$n$ & Doubly rand. & Single rand. & HC0 & CRVE1 & Hansen jack. \\
\midrule
20  & 0.043 & 0.046 & \textcolor{red}{0.114} & \textcolor{red}{0.082} & 0.031 \\
30  & 0.039 & 0.046 & \textcolor{red}{0.076} & 0.060 & 0.038 \\
40  & 0.045 & 0.051 & \textcolor{red}{0.074} & 0.058 & 0.042 \\
50  & 0.038 & 0.042 & 0.054 & 0.046 & 0.037 \\
60  & 0.044 & 0.050 & 0.061 & 0.055 & 0.047 \\
80  & 0.041 & 0.047 & 0.056 & 0.052 & 0.044 \\
100 & 0.036 & 0.043 & 0.052 & 0.046 & 0.041 \\
150 & 0.042 & 0.049 & 0.055 & 0.052 & 0.048 \\
200 & 0.042 & 0.048 & 0.053 & 0.050 & 0.048 \\
400 & 0.038 & 0.049 & 0.052 & 0.051 & 0.050 \\
\bottomrule
\end{tabular}
\end{table}

The densified small-$n$ grid confirms and sharpens the picture established by the original six sample sizes. The randomization tests and the Hansen jackknife stay below the liberal threshold of $0.065$ at every $n$, while HC0 remains anti-conservative through $n = 40$ and CRVE1 has settled close to the nominal level by $n = 30$. The transition of HC0 from liberal to acceptable behaviour is gradual rather than abrupt: rejection rates decay smoothly from $0.114$ at $n = 20$ through $0.076$ at $n = 30$ and $0.074$ at $n = 40$, crossing below the liberal threshold by $n = 50$ (rate $0.054$), after which values fluctuate within Monte Carlo noise of the nominal level.

\section{Robustness to Departures from i.i.d.\ Gaussian Errors}
\label{app:het}

The simulation of Section~\ref{subsec:simdesign} uses i.i.d.\ $N(0,1)$ errors --- a strong joint assumption combining normality and homoskedasticity. This appendix examines how the five procedures behave under two classes of departure: (D.1) homoskedastic but non-Gaussian errors, and (D.2) heteroskedastic errors. All other design parameters are as in Section~\ref{subsec:simdesign}; we report size at $\delta = 0$ only.

\subsection{Non-Gaussian homoskedastic errors}
\label{app:het:nongauss}

We retain $\mathrm{Var}(\varepsilon_i) = 1$ and i.i.d.\ sampling, but replace the Gaussian with two alternatives:
\begin{itemize}
  \item \textbf{NG1} (heavy tails): $\varepsilon_i \sim t_3 / \sqrt{3}$, so $\mathrm{Var}(\varepsilon_i) = 1$ with substantially heavier tails than $N(0,1)$.
  \item \textbf{NG2} (skewness): $\varepsilon_i = (\chi^2_3 - 3) / \sqrt{6}$, a centred and rescaled $\chi^2_3$ with $\mathrm{Var}(\varepsilon_i) = 1$ and pronounced right-skew.
\end{itemize}
The randomisation tests are validity-preserving in this setting by construction (exchangeability of labels under the sharp null does not require Gaussianity), so the experiment serves primarily as a diagnostic for the parametric procedures.

\begin{table}[H]
\centering
\caption{Empirical size under non-Gaussian homoskedastic errors ($\alpha = 0.05$, $N_{\mathrm{outer}} = 5{,}000$, $B = 4{,}999$). Values above 0.065 in red.}
\label{tab:nongauss}
\begin{tabular}{llccccc}
\toprule
Design & $n$ & Doubly rand. & Single rand. & HC0 & CRVE1 & Hansen jack. \\
\midrule
\multirow{4}{*}{NG1 ($t_3$)}  & 20  & 0.040 & 0.048 & \textcolor{red}{0.105} & \textcolor{red}{0.071} & 0.027 \\
                              & 30  & 0.035 & 0.041 & 0.062 & 0.046 & 0.030 \\
                              & 50  & 0.038 & 0.048 & 0.061 & 0.051 & 0.040 \\
                              & 100 & 0.037 & 0.044 & 0.051 & 0.046 & 0.041 \\
\midrule
\multirow{4}{*}{NG2 ($\chi^2_3$)} & 20  & 0.044 & 0.052 & \textcolor{red}{0.113} & \textcolor{red}{0.081} & 0.033 \\
                                  & 30  & 0.036 & 0.048 & \textcolor{red}{0.073} & 0.056 & 0.037 \\
                                  & 50  & 0.039 & 0.045 & 0.057 & 0.048 & 0.035 \\
                                  & 100 & 0.039 & 0.042 & 0.049 & 0.045 & 0.040 \\
\bottomrule
\end{tabular}
\end{table}

The randomization tests stay at or below nominal across all eight cells, with maximum rejection rate $0.052$. This is the behaviour predicted by Theorem~\ref{thm:exactness}: validity does not depend on the error distribution. The Hansen jackknife is similarly insulated, remaining below $0.05$ throughout. HC0 is anti-conservative at $n=20$ under both heavy tails (NG1, $0.105$) and skewness (NG2, $0.113$); under NG2 the inflation persists to $n=30$ ($0.073$), while under NG1 it has already abated by that sample size. CRVE1 is mildly anti-conservative at $n = 20$ ($0.071$ under NG1, $0.081$ under NG2). Heavy tails ($t_3$) produce slightly \emph{less} HC0 inflation at $n = 20$ than the Gaussian baseline of Table~\ref{tab:size} ($0.105$ vs $0.114$), suggesting that occasional large residuals partly compensate for the small-sample downward bias in the variance estimator. By $n = 100$ all parametric procedures recover near-nominal size.

\subsection{Heteroskedastic errors}
\label{app:het:het}

Three heteroskedastic departures preserve normality but break the constant-variance assumption. In each case, $\varepsilon_i \sim N(0, \sigma_i^2)$ with $\sigma_i^2$ depending on the binary treatment indicators:
\begin{itemize}
  \item \textbf{H1} (variance scales with treatment status): $\sigma_i^2 = 1 + \kappa\,\mathrm{AFFECTED}_i$.
  \item \textbf{H2} (variance scales with time period): $\sigma_i^2 = 1 + \kappa\,\mathrm{TIME}_i$.
  \item \textbf{H3} (variance scales with treatment exposure): $\sigma_i^2 = 1 + \kappa\,\mathrm{TIME}_i \times \mathrm{AFFECTED}_i$.
\end{itemize}
We set $\kappa = 2$, so that the variance triples in the affected cell. This is a substantial but empirically plausible degree of heteroskedasticity.

\begin{table}[H]
\centering
\caption{Empirical size under three heteroskedastic error designs ($\kappa = 2$, $\alpha = 0.05$, $N_{\mathrm{outer}} = 5{,}000$, $B = 4{,}999$). Values above 0.065 in red.}
\label{tab:het}
\begin{tabular}{llccccc}
\toprule
Design & $n$ & Doubly rand. & Single rand. & HC0 & CRVE1 & Hansen jack. \\
\midrule
\multirow{4}{*}{H1} & 20  & 0.053 & 0.060 & \textcolor{red}{0.119} & \textcolor{red}{0.089} & 0.038 \\
                    & 30  & 0.047 & 0.056 & \textcolor{red}{0.084} & \textcolor{red}{0.066} & 0.042 \\
                    & 50  & 0.049 & 0.051 & 0.063 & 0.052 & 0.042 \\
                    & 100 & 0.048 & 0.053 & 0.059 & 0.054 & 0.049 \\
\midrule
\multirow{4}{*}{H2} & 20  & 0.050 & 0.051 & \textcolor{red}{0.121} & \textcolor{red}{0.091} & 0.039 \\
                    & 30  & 0.044 & 0.047 & \textcolor{red}{0.080} & 0.061 & 0.037 \\
                    & 50  & 0.050 & 0.049 & \textcolor{red}{0.069} & 0.057 & 0.046 \\
                    & 100 & 0.043 & 0.048 & 0.056 & 0.049 & 0.043 \\
\midrule
\multirow{4}{*}{H3} & 20  & 0.047 & 0.051 & \textcolor{red}{0.119} & \textcolor{red}{0.084} & 0.033 \\
                    & 30  & 0.049 & 0.053 & \textcolor{red}{0.084} & \textcolor{red}{0.065} & 0.040 \\
                    & 50  & 0.050 & 0.055 & \textcolor{red}{0.072} & 0.059 & 0.046 \\
                    & 100 & 0.045 & 0.049 & 0.055 & 0.051 & 0.047 \\
\bottomrule
\end{tabular}
\end{table}

Across all twelve cells in Table~\ref{tab:het}, the randomization tests stay below the liberal threshold of $0.065$, with maximum rejection rate $0.060$ (single rand, H1 at $n=20$). The Hansen jackknife is conservative throughout ($\le 0.049$). HC0 and CRVE1 inflate at small $n$ (HC0 reaches $0.121$ under H2 at $n=20$) with magnitudes at $n \le 30$ comparable to those under the homoskedastic Gaussian baseline of Table~\ref{tab:size}. At $n = 50$, however, HC0 inflation persists under the two designs (H2 and H3) in which the variance heterogeneity interacts with the time margin, while it has already abated under H1. This pattern is consistent with the leverage-driven diagnosis of Section~\ref{sec:inference_lit}: heteroskedasticity \emph{along the same margin} as a high-leverage regressor amplifies the small-sample bias in the variance estimator, but pure variance-by-treatment-status heterogeneity (H1) does not. By $n = 100$ all parametric procedures recover near-nominal size. The randomization tests, whose validity is by construction independent of error variance heterogeneity, are unaffected throughout.

\subsection{Summary of robustness experiments}

Aggregating across all five departure designs and four sample sizes (40 cells per method): the doubly randomised and single randomised tests stay at or below the liberal threshold of $0.065$ in every cell, and the Hansen jackknife remains conservative in every cell. HC0 is liberal in $11$ of $20$ cells and CRVE1 in $7$ of $20$. The inflation is concentrated at $n \le 30$, with two heteroskedastic cells (H2 and H3 at $n = 50$) showing residual HC0 over-rejection consistent with the leverage-driven diagnosis: variance heterogeneity along the time margin amplifies the small-sample bias that is already present under the iid Gaussian baseline. Crucially, the parametric procedures' difficulties are visible already under the strongest distributional assumptions (Table~\ref{tab:size}), and the magnitudes of inflation at $n \le 30$ are comparable across the Gaussian baseline and the five departure designs. This is consistent with the diagnosis in Section~\ref{sec:inference_lit}: the small-sample distortion of CRVE-type procedures is driven by leverage and effective degrees of freedom of the regressor matrix, not by the error distribution. The randomization-based tests, validated by Theorem~\ref{thm:exactness}, are insulated from both classes of departure by construction.
\end{document}